\newcommand{\XX}{{\bm X }}
\newcommand{\YY}{{\bm Y }}
\newcommand{\xx}{{\bm  x}}
\newcommand{\yy}{{\bm y }}
\newcommand{\VV}{{\bm V }}
\newcommand{\WW}{{\bm W }}
\newcommand{\uu}{{\bm u }}
\newcommand{\vv}{{\bm v }}
\newcommand{\ww}{{\bm w }}
\newcommand{\UU}{{\bm  U}}
\def \Db{{\mathbb D}}
\def \Eb{{\mathbb E}}
\def \Pb{{\mathbb P}}
\def \Qb{{\mathbb Q}}
\def \Rb{{\mathbb R}}
\def \Ec{{\mathcal E}}
\def \Fc{{\mathcal F}}
\def \Cc{{\mathcal C}}
\def \Sc{{\mathcal S}}
\def \Hc{{\mathcal H}}
\def \Nc{{\mathcal N}}
\def \Ic{{\mathcal I}}
\def \Jc{{\mathcal J}}
\def \Uc{{\mathcal U}}
\def \Oc{{\mathcal O}}
\def \Cf{{\mathfrak C}}
\newcommand{\bqa}{\begin{eqnarray*}}
\newcommand{\eqa}{\end{eqnarray*}}
\newcommand{\bqan}{\begin{eqnarray}}
\newcommand{\eqan}{\end{eqnarray}}
\newcommand{\bqt}{\begin{quote}}
\newcommand{\eqt}{\end{quote}}
\newcommand{\bt}{\begin{tabbing}}
\newcommand{\et}{\end{tabbing}}
\newcommand{\bit}{\begin{itemize}}
\newcommand{\eit}{\end{itemize}}
\newcommand{\ben}{\begin{enumerate}}
\newcommand{\een}{\end{enumerate}}
\newcommand{\beq}{\begin{equation}}
\newcommand{\eeq}{\end{equation}}
\newtheorem{Theorem}{Theorem}
\newtheorem{Proposition}{Proposition}
\newtheorem{Corollary}{Corollary}
\newtheorem{Remark}{Remark}
\newtheorem{Example}{{\it Example}}
\newtheorem{Condition}{Condition}
\newcommand{\eps}{\varepsilon}
\def\mds{\medskip}
\def\1{{\mathbf 1}}
\def\0{{\mathbf 0}}
\title{Estimation of copulas via Maximum Mean Discrepancy}
\author{Pierre Alquier~\footnote{RIKEN AIP, Nihonbashi 1-chome Mitsui Building (15th floor), 1-4-1 Nihonbashi, Chuo-ku, Tokyo, 103-0027, Japan. Email: \texttt{pierrealain.alquier@riken.jp}}, Badr-Eddine Ch\'erief-Abdellatif~\footnote{Department of Statistics, University of Oxford. 24-29 St Giles'
Oxford OX1 3LB, United Kingdom. Email: \texttt{badr-eddine.cherief-abdellatif@stats.ox.ac.uk}}, Alexis Derumigny~\footnote{Delft University of Technology, Department of Applied Mathematics. Mekelweg 4, 2628 CD Delft. The Netherlands. Email: \texttt{a.f.f.derumigny@tudelft.nl}},\\ and Jean-David Fermanian\footnote{CREST, ENSAE, Institut Polytechnique de Paris, 5 Avenue Le Chatelier, 91120 Palaiseau, France. Email: \texttt{jean-david.fermanian@ensae.fr}}}
\date{\today}
\begin{document}

\maketitle

\begin{abstract}
This paper deals with robust inference for parametric copula models. Estimation using Canonical Maximum Likelihood might be unstable, especially in the presence of outliers. We propose to use a procedure based on the Maximum Mean Discrepancy (MMD) principle. We derive non-asymptotic oracle inequalities, consistency and asymptotic normality of this new estimator. In particular, the oracle inequality holds without any assumption on the copula family, and can be applied in the presence of outliers or under misspecification. Moreover, in our MMD framework, the statistical inference of copula models for which there exists no density with respect to the Lebesgue measure on $[0,1]^d$, as the Marshall-Olkin copula, becomes feasible. A simulation study shows the robustness of our new procedures, especially compared to pseudo-maximum likelihood estimation. An R package implementing the MMD estimator for copula models is available.
\end{abstract}

\section{Introduction}

\subsection{Context}

Since the seminal work of Sklar~\cite{sklar1959}, it is well known that every $d$-dimensional distribution $F$ can be decomposed as
$F(\xx) = C \big(F_1(x_1), \dots, F_d(x_d) \big)$, for all $\xx = (x_1, \dots, x_d) \in \Rb^d$.
Here, $F_1, \dots, F_d$ are the marginal distributions of $F$ and $C$ is a distribution on the unit cube $[0,1]^d$ with uniform margins, called a copula. This allows any user to split the complex problem of estimating a multivariate distribution into two simpler problems which are the estimation of the margins on one side, and of the copula on the other side.
Copulas have become increasingly useful to model multivariate distributions in a wide variety of applications : finance, insurance, hydrology, engineering and so on.
We refer to \cite{Nelsen2007,HofertKoja} for a general introduction and background on copula models.

\mds

Often, a copula of interest $C$ belongs to a parametric family $\Cc=\{C_\theta, \theta \in \Theta\subset \Rb^p\}$ and one is interested in the estimation of
the ``true'' value of the parameter~$\theta$. Typically, the goal is to evaluate the underlying copula only, without trying to specify the marginal distributions.
In such a case, the most popular method for estimating parametric copula models is by Canonical Maximum Likelihood or CML, shorter (\cite{Genest1995,ShihLouis}). This is a semi-parametric analog of Maximum Likelihood Estimation for copula models for which the margins are left unspecified and replaced by nonparametric counterparts.
The method of moments is also a popular estimation technique, most often when $p=1$,
and is usually done by inversion of Kendall's tau or Spearman's rho. The latter estimators have been implemented in the R package VineCopula~\cite{VineCopulaR} and attain the usual $\sqrt{n}$ rate of convergence as if the margins were known: see \cite{tsukahara2005} for the asymptotic theory.

\mds

Nevertheless, all the aforementioned estimation approaches suffer from drawbacks. In particular, they are not robust statistically speaking. More specifically, assume that the true copula is slightly perturbed in the sense that $C = (1 - \varepsilon) C_{\theta_0} + \varepsilon \tilde C$ for a small $\varepsilon > 0$ and a copula $\tilde C \neq C_{\theta_0}$. In general, there is no guarantee that the estimators obtained by CML or by the method of moments should be close to $\theta_0$ when $\varepsilon \neq 0$, since this problem still occurs in the case of most usual M-estimators generally speaking.

\mds

In the literature, there are very few attempts to build robust estimation methods for semi-parametric copula models that would be ``omnibus'' (i.e. not dependent on some particular choices of models).
Using Mahalanobis distances computed using robust estimates of covariance and location, Mendes et al.~\cite{MendesNelsen} identified some points which seem not to follow the assumed dependence structure. Then, some copula parameters are obtained through the minimization of weighted goodness of fit statistics.
In the semiparametric copula-based multivariate dynamic (SCOMDY) framework (\cite{ChenFan2006}), Kim and Lee~\cite{KimLee} built a minimum density power divergence estimator which shows some resistance to some types of outliers.
Deneke and M\"uller\cite{Denecke} proposed a parametric robust estimation method based on likelihood depth (\cite{Rousseeuw}).
Recently, Goegebeur et al.~\cite{Goegebeur} have considered robust and nonparametric estimation of the coefficient of tail dependence in presence of random covariates, that may be a way of estimating copulas for some particular models.
Therefore, even if many estimators have been proposed for Huber contaminated models in general parametric cases, this has not been the case for semiparametric copula models yet.
This paper is an attempt to fill this gap.

\mds

To this end, we need to consider a relevant distance between distributions.
The Maximum Mean Discrepancy (MMD) between two arbitrary probability distributions $\Pb$ and $\Qb$ is defined as
$$  \Db(\Pb,\Qb)= \sup_{f\in \Fc} \bigg| \int f\, d\Pb - \int f\, d\Qb \bigg|,$$
where $\Fc$ is the unit ball in a universal reproducing kernel Hilbert space (RKHS) $\Hc$ defined on a compact metric space, with an associated kernel $K$ and a norm $\|\cdot\|_{\Hc}$.
It can be proved that $\Db(\Pb,\Qb)$ is the distance between the kernel mean embeddings of the two underlying probabilities, i.e.
$  \Db(\Pb,\Qb)=\| \mu_\Pb - \mu_\Qb\|_{\Hc}$; see Muandet et al.~\cite{Muandet2017}, Section 3.5, that provides a state-of-the-art introduction to the theory of RKHS and MMD.
When the kernel $K$ is characteristic  (i.e. when the map $\Pb \mapsto \mu_\Pb$ is injective), MMD becomes a distance
between the two probabilities $\Pb$ and $\Qb$. Such a distance can be easily empirically estimated and has been used many times in different areas of statistics and machine learning; see, e.g.,~\cite{Danafar2013,Gretton2012} for the two-sample test problem.

\mds

As a tool for parametric estimation, MMD has been studied as a general method for inference only recently~\cite{gerber2020,Briol2019,Cherief2019,Cherief2020}, even though it was implicitly used in specific examples in machine learning (\cite{Dziugaite2015}). In the latter papers, it appeared that MMD criteria lead to consistent estimators that are
robust to model misspecification, for most models and without any assumption on the actual distribution of
the data. Moreover, the flexibility offered by the choice of the tuning parameter of the kernel, which can be used to
build a trade-off between statistical efficiency and robustness, is another advantage of such estimators.
Thus, it seems natural to apply such inference techniques to copulas, for which the risk of misspecification can sometimes be important.

\mds

In this paper, we will study a general semi-parametric inference procedure for copulas that is robust with respect to corrupted data, and that can be applied in case of model misspecification. Note that other distances are known to induce robustness, like the total variation distance~\cite{Yatracos1985} or the Hellinger distance~\cite{BaraudBirgeSart2017RhoEstimators}. However, the estimation procedures proposed in these papers are not computable. Also, we refer the reader to~\cite{BaraudBirgeSart2017RhoEstimators} for a thorough discussion on why the MLE, based on the Kullback-Leibler divergence, cannot enjoy the same robustness properties.

\mds

\mds

The rest of the paper is organized as follows: the remaining of the introduction yields notations and the definition of our estimators.
Section~\ref{section:theory} contains our theoretical results: non-asymptotic oracle inequalities, consistency and asymptotic distributions of our estimators.
Section~\ref{section:experiments} provides experimental results.
A simulation study confirms the robustness of MMD. We also provide an R Package, called MMDCopula~\cite{MMDCopula}, which allows statisticians to apply our algorithms.

\mds

Note that our package computes the MMD estimator by a stochastic gradient algorithm, described in Section~\ref{section:experiments}. From~\cite{Briol2019,Cherief2019}, such an algorithm can be implemented to compute the MMD estimator as long as it is possible to sample from the model. Thus, our package has been built on the package VineCopula~\cite{VineCopulaR}, which allows to sample from the most popular copula families. This package also provided us some helpful formulas for the densities of some copulas, and their differentials. More details about the implementation can be found in Section~\ref{section:experiments}.

\subsection{Notations}

Let $(\XX_i)_{i=1,\ldots,n}$ be an i.i.d. sample of $d$-dimensional random vectors, whose underlying copula is denoted by $C_0$ and
whose margins are denoted by $F_1,\ldots,F_d$. The latter ones will be left unspecified. We assume these margins are continuous. This standard assumption will allow to invoke powerful results from the theory of empirical copula processes (\cite{Segers2014} in particular).
Let us define the unobservable random variables $U_k=F_k(X_k)$, $k\in\{1,\ldots,d\}$, and $\UU=(U_1,\ldots,U_d)$, for a given random vector  $\XX=(X_1,\ldots,X_d)$ whose underlying copula is $C_0$ and underlying margins are $F_1,\ldots,F_d$. Obviously, the cdf of $\UU$ is $C_0$, whose law is denoted by $\Pb_0$.
The empirical measure associated to $(\XX_i)_{i=1,\ldots,n}$ is denoted as $\Pb_n$.

\mds

We consider a particular parametric family of copulas $\Cc=\{C_\theta, \theta \in \Theta\subset \Rb^p\}$ (the family ``of interest'') and we search
the best-suited copula inside the latter family.
When the model is correctly specified, there exists a ``true'' parameter $\theta_0\in \Theta$ i.e. $C_0=C_{\theta_0}$.
More generally, possibly in case of misspecification, we focus on a  ``pseudo-true'' parameter $\theta^*_0\in \Theta$ so that a particular distance between $C_0$ and
$C_\theta$ is minimized over $\theta\in \Theta$. In our case, this chosen distance will be the MMD.
Denoting by $\Pb^U_\theta$ the law induced by $C_\theta$ on the hypercube $\Uc=[0,1]^d$,
a pseudo-true value is formally defined as
$$  \theta^*_0 \in \arg\min_{\theta\in\Theta} \Db(\Pb^U_\theta,\Pb_0).$$

\mds

In the copula-related literature with unknown margins, it is common to define a pseudo-sample $(\hat \UU_i)_{i=1,\ldots,n}$, where
$ \hat \UU_i=(\hat U_{i,1},\ldots,\hat U_{i,d})$ and
$$ \hat U_{i,k}= F_{n,k}(X_{i,k}),\;\; F_{n,k}(t)= n^{-1}\sum_{i=1}^n \1( X_{i,k}\leq t ),$$
for every $i\in\{1,\ldots,n\}$, $k\in\{1,\ldots,d\}$ and every real number $t$, denoting by $\1(\cdot)$ the usual indicator function.
Our goal will be to evaluate the pseudo-true parameter $\theta^*_0$ with MMD techniques, from the initial sample $(\XX_i)_{i=1,\ldots,n}$ or from the pseudo-sample $(\hat\UU_i)_{i=1,\ldots,n}$.
The empirical distribution of the latter pseudo-sample is called the empirical copula $C_n$ (\cite{JD2004}).
\mds

A relevant idea will be to work on the hypercube $\Uc = [0,1]^d$ instead of $\Rb^d$.
To be specific, imagine we observe $n$ i.i.d. realizations of $\UU$, called $\UU_1,\ldots,\UU_n$, and let $\Pb_n^U$ be the associated empirical measure on $\Uc$.
To obtain an estimator of $\theta$, the MMD criterion to be minimized is then
$ \Db(\Pb^U_\theta,\Pb^U_n)= \| \mu_{\Pb^U_\theta} - \mu_{\Pb^U_n} \|_{\Hc_U},$
for some RKHS $\Hc_U$, that is associated with a kernel $K_U:\Uc\times \Uc \rightarrow \Rb$.
As in~\cite{Briol2019}, we have
\begin{eqnarray*}
\lefteqn{ \Db^2(\Pb^U_\theta,\Pb^U_n) = \int K_U(\uu,\vv) \Pb^U_\theta (d\uu) \, \Pb^U_\theta (d\vv) - 2 \int K_U(\uu,\vv) \Pb^U_\theta (d\uu) \, \Pb^U_n (d\vv) }\\
&+& \int K_U(\uu,\vv) \Pb_n^U (d\uu) \, \Pb_n^U (d\vv). \hspace{6cm}
\end{eqnarray*}
Since we do not observe some realizations of $\UU$, we have to replace them by pseudo-observations in the latter criterion.
This yields the approximate criterion
\begin{eqnarray*}
\lefteqn{ \Db^2(\Pb^U_\theta,\hat\Pb^U_n) =
\int K_U(\uu,\vv) \Pb^U_\theta (d\uu) \, \Pb^U_\theta (d\vv) - 2 \int K_U(\uu,\vv) \Pb^U_\theta (d\uu) \, \hat\Pb^U_n (d\vv) }\\
&+& \int K_U(\uu,\vv) \hat\Pb_n^U (d\uu) \, \hat\Pb_n^U (d\vv), \hspace{6cm}
\end{eqnarray*}
where $   \hat\Pb_n^U $ denotes the empirical measure associated with the pseudo-sample $(\hat \UU_i)_{i=1,\ldots,n}$.
Then, an estimator of $\theta^*_0$ is defined as
\begin{align}
\hat \theta_n
& \in \arg\min_{\theta\in\Theta} \Db(\Pb^U_\theta,\hat\Pb^U_n) \nonumber \\
& \in \arg\min_{\theta\in\Theta}
\int K_U(\uu,\vv) \Pb^U_\theta (d\uu) \, \Pb^U_\theta (d\vv)
- \frac{2}{n}\sum_{i=1}^n \int K_U(\uu,\hat \UU_i) \Pb^U_\theta (d\uu)
\label{RKHS_criterion}
\end{align}
If $C_\theta$ has a density $c_\theta$ with respect to the Lebesgue measure on $[0,1]^d$, this criterion may be rewritten
\begin{equation}
\label{RKHS_criterion:density}
\hat \theta_n \in \arg\min_{\theta\in\Theta}
    \int K_U(\uu,\vv) c_\theta (\uu) c_\theta (\vv) \, d\uu \, d\vv
- \frac{2}{n}\sum_{i=1}^n \int K_U(\uu,\hat \UU_i) c_\theta (\uu) \, d\uu.
\end{equation}

\mds

 It is clear from the definition that $\hat{\theta}_n$ depends on the kernel $K_U$. Thus, the choice of the latter kernel is a very important question. The experimental study in Section~\ref{section:experiments} shows that the most common parametric copulas, Gaussian kernels $K_G(\uu,\vv) = \exp(-\|h(\uu)-h(\vv)\|^2/\gamma^2)$ lead to very good results ($h$ being the identity map or the inverse of the c.d.f of a standard Gaussian random variable, applied coordinatewise). Interestingly, it empirically seems that the value of $\gamma$ that leads to the smallest MSE mainly depends on the kernel, and not really on the sample size nor the true value of the parameter. This is shown in Figure~\ref{FigureGamma}, and in additional plots in the supplementary material. Actually, this fact was rigorously proven in~\cite{Cherief2019} for the Gaussian mean model, and we conjecture that it holds more generally. This allows to calibrate $\gamma$ once and for all through a preliminary set of simulations. Note that Dziugaite et al.~\cite{Dziugaite2015} proposed a median heuristic to calibrate $\gamma$ that yields good results in practice. Alternatively, Briol et al.~\cite{Briol2019} proposed to minimize the asymptotic variance of the estimated parameter, which we could do thanks to our Theorem~\ref{Th_AN_bis}. A more complete discussion on the choice of the kernel can be found page 14 in~\cite{Briol2019}.

\mds

\begin{Remark}
An alternative approach would be to directly work with the initial observations $\XX_i$, instead of the pseudo-observations $\hat\UU_i$.
In this case, we apply the same strategy, but with the initial sample. The ``feasible'' law of $\XX_i$ will be semi-parametric, because its margins are non-parametrically estimated.
To obtain an estimator of $\theta$, the criterion to be minimized would now be
$ \Db(\Pb^X_\theta,\Pb^X_n) = \| \mu_{\Pb^X_\theta} - \mu_{\Pb^X_n} \|_{\Hc_X},$
for some RKHS $\Hc_X$, that is associated with a kernel $K_X:\Rb^d\times \Rb^d \rightarrow \Rb$.
Here, $\Pb^X_\theta$ denotes the law of $\XX$ given by $F_1,\ldots,F_d$ and $C_\theta$. Applying Sklar's theorem, note that, for every $\xx=(x_1,\ldots,x_d)$,
$ \Pb^X_\theta (\XX \leq \xx)=C_{\theta}\big(F_{1}(x_1),\ldots,F_{d}(x_d) \big).$
As above,
\begin{eqnarray*}
\lefteqn{ \Db^2(\Pb^X_\theta,\Pb^X_n) = \int K_X(\xx,\yy) \Pb^X_\theta (d\xx) \, \Pb^X_\theta (d\yy) - 2 \int K_X(\xx,\yy) \Pb^X_\theta (d\xx) \, \Pb^X_n (d\yy) }\\
&+& \int K_X(\xx,\yy) \Pb^X_n (d\xx) \, \Pb^X_n (d\yy).\hspace{6cm}
\end{eqnarray*}
Since we do not know the margins of $\XX$, this yields the approximate criterion
\begin{eqnarray*}
\lefteqn{ \Db^2(\hat\Pb^X_\theta,\Pb^X_n) =
\int K_X(\xx,\yy) \hat\Pb^X_\theta (d\xx) \, \hat\Pb^X_\theta (d\yy) - 2 \int K_X(\xx,\yy) \hat\Pb^X_\theta (d\xx) \, \Pb^X_n (d\yy) }\\
&+& \int K_X(\xx,\yy) \Pb^X_n (d\xx) \, \Pb^X_n (d\yy), \hspace{6cm}
\end{eqnarray*}
where, for every $\xx=(x_1,\ldots,x_d)$, we define $ \hat \Pb^X_\theta (\XX \leq \xx)=C_{\theta}\big(F_{n,1}(x_1),\ldots,F_{n,d}(x_d) \big).$
Then, this provides another estimator
$$ \hat \theta^X_n \in \arg\min_{\theta\in\Theta} \Db(\hat \Pb^X_\theta,\Pb^X_n)
= \arg\min_{\theta\in\Theta}
\int K(\xx,\yy) \hat\Pb^X_\theta (d\xx) \, \hat\Pb^X_\theta (d\yy) - \frac{2}{n}\sum_{i=1}^n \int K(\xx,\XX_i) \hat\Pb^X_\theta (d\xx) .$$
Unfortunately, the evaluation of any integral as $\int \psi(\xx)\, \hat\Pb^X_\theta(d\xx)$ is costly in general.
Indeed,
$$\int \psi(\xx)\, \hat\Pb^X_\theta(d\xx) \simeq  n^{-d}\sum_{i_1,\ldots,i_d=1}^{n} \psi(X_{i_1,1},\ldots,X_{i_d,d})
c_\theta\big( F_{n,1}(X_{i_1,1}),\ldots, F_{n,d}(X_{i_d,d}) \big).$$
Therefore, it is more convenient to deal with the first method, especially if $d$ is large. This is our choice in this paper.
\end{Remark}

\mds


\section{Theoretical results}
\label{section:theory}

We now study the theoretical properties of the estimator defined by~(\ref{RKHS_criterion}).
Since we will work with pseudo-observations from now on, we omit the upper index ``$U$'' to lighten notations.
Thus, the law induced by the pseudo-sample $(\hat\UU_i)_{i=1,\ldots,n}$, previously denoted $\hat\Pb_n^U$, simply becomes  $\hat\Pb_n$.
Moreover, $\Pb_n^U$, the law of the unobservable sample $(\UU_i)_{i=1,\ldots,n}$ becomes $\Pb_n$. Recall that the true underlying law is $\Pb_0$,
and $\Pb_0=\Pb_{\theta_0^*}$ only if the model is correctly specified.
For any function $f:\Ec\subset \Rb^d\rightarrow \Rb$ that is twice continuously differentiable, set
$$ \|d^{(2)} f\|_\infty=\sup_{\xx\in \Ec} \sup_{k,l=1,\ldots,d} \bigg|\frac{ \partial^2 f}{\partial x_k \partial x_l} (\xx) \bigg|.$$

We assume in this section that the kernel $K_U$ is symmetrical, i.e. $K_U(\uu,\vv)=K_U(\vv,\uu)$ for every $\uu$ and $\vv$ in $[0,1]^d$ (otherwise, replace $K_U$ by a symmetrized version). We also assume that the kernel is bounded over $[0,1]^2$. Note that the popular Gaussian kernel $K_G(\uu,\vv) = \exp(-\|\uu-\vv\|^2/\gamma^2)$, is characteristic, symmetric and bounded.
We recall that, when $K$ is a characteristic kernel, the divergence
\begin{eqnarray*}
\lefteqn{ \Db^2(\Pb,\Qb) = \int K_U(\uu,\vv) \Pb (d\uu) \, \Pb (d\vv) - 2 \int K_U(\uu,\vv) \Pb (d\uu) \, \Qb (d\vv) }\\
&+& \int K_U(\uu,\vv) \Qb (d\uu) \, \Qb (d\vv), \hspace{6cm}
\end{eqnarray*}
induces a true distance between probability measures on $[0,1]^d$.

\subsection{Non-asymptotic guarantees}


The first result of this section is a non-asymptotic ``universal'' upper bound in terms of MMD distance that holds with high probability for any underlying distribution.
Our bound exhibits clear dimensionality- and kernel-dependent constants. It establishes that the MMD estimator is robust to misspecification, and is consistent at the usual optimal $n^{-1/2}$ rate. Similar results can be found in the literature, both in the i.i.d.\ (Theorem 1 in \cite{Briol2019}, Theorem 3.1 in \cite{Cherief2019}) and in the dependent setting (Theorem 3.2 in \cite{Cherief2019}), but none of them can be applied to semi-parametric copula models.

\begin{Theorem}
\label{upper_bound_finite_distance}
The kernel $K_U$ is assumed to be two times continuously differentiable on $[0,1]^d$.
Then for any $\nu,\delta>0$ with $\nu+\delta<1$, with probability larger than $1-\delta-\nu \in (0,1)$,
\begin{align*}
\Db(\Pb_{\hat\theta_n},\Pb_0)
\leq & \inf_{\theta\in\Theta} \Db(\Pb_{\theta},\Pb_0) +
\bigg\{ \frac{8}{n}\sup_{\uu\in [0,1]^d} K_U(\uu,\uu)  \bigg\}^{1/2} \Big\{ 1 + \big(-\ln \delta \big)^{1/2} \Big\} \\
& \quad + \; \Bigg\{\frac{2d^2}{n} \|d^{(2)} K_U\|_\infty \ln\bigg(\frac{2d}{\nu}\bigg)\Bigg\}^{1/2}.\hspace{6cm}
\end{align*}
\end{Theorem}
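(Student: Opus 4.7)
The plan is to reduce the oracle bound to two independent stochastic quantities and to handle each with a dedicated concentration tool. By the triangle inequality for $\Db$ and the optimality of $\hat\theta_n$, for any $\theta \in \Theta$,
$$ \Db(\Pb_{\hat\theta_n}, \Pb_0) \;\leq\; \Db(\Pb_{\hat\theta_n}, \hat\Pb_n) + \Db(\hat\Pb_n, \Pb_0) \;\leq\; \Db(\Pb_\theta, \hat\Pb_n) + \Db(\hat\Pb_n, \Pb_0) \;\leq\; \Db(\Pb_\theta, \Pb_0) + 2\,\Db(\hat\Pb_n, \Pb_0), $$
so that taking the infimum in $\theta$ leaves the task of bounding $2\,\Db(\hat\Pb_n, \Pb_0)$ with high probability. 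Introducing the unobserved empirical measure $\Pb_n$ of the true transforms $\UU_i=(F_1(X_{i,1}),\dots,F_d(X_{i,d}))$, one more triangle inequality decomposes this into $\Db(\Pb_n, \Pb_0) + \Db(\hat\Pb_n, \Pb_n)$: a genuine sampling fluctuation and a pseudo-observation bias, which the two stochastic terms in the theorem will respectively control.

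For $\Db(\Pb_n, \Pb_0)$ I would follow the now standard MMD concentration argument used in~\cite{Briol2019,Cherief2019}. Expanding $\Db^2(\Pb_n, \Pb_0)$ in the RKHS and taking expectations gives $\Eb[\Db^2(\Pb_n, \Pb_0)] \leq \sup_\uu K_U(\uu,\uu)/n$, whence $\Eb[\Db(\Pb_n, \Pb_0)] \leq (\sup_\uu K_U(\uu,\uu)/n)^{1/2}$ by Jensen. Since replacing a single $\UU_i$ by an independent copy modifies $\Db(\Pb_n, \Pb_0)$ by at most $2\,(\sup_\uu K_U(\uu, \uu))^{1/2}/n$, McDiarmid's bounded differences inequality yields, with probability at least $1-\delta$,
$$ \Db(\Pb_n, \Pb_0) \;\leq\; \left(\tfrac{\sup_\uu K_U(\uu,\uu)}{n}\right)^{1/2}\bigl(1 + (2\ln(1/\delta))^{1/2}\bigr). $$
Doubling this and using $\sqrt{4}\leq\sqrt{8}$ to merge the two prefactors absorbs it into the first stochastic term of the theorem.

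For $\Db(\hat\Pb_n, \Pb_n)$, the univariate Dvoretzky--Kiefer--Wolfowitz inequality together with a union bound over the coordinates $k=1,\dots,d$ guarantees that, with probability at least $1-\nu$,
$$ \eps_n := \max_{i\leq n,\; k\leq d} |\hat U_{i,k} - U_{i,k}| \;\leq\; \left(\tfrac{\ln(2d/\nu)}{2n}\right)^{1/2}. $$
On this event, I would expand
$$ \Db^2(\hat\Pb_n, \Pb_n) \;=\; \frac{1}{n^2}\sum_{i,j=1}^n \bigl[K_U(\hat\UU_i, \hat\UU_j) - K_U(\hat\UU_i, \UU_j) - K_U(\UU_i, \hat\UU_j) + K_U(\UU_i, \UU_j)\bigr]. $$
Each bracket is a mixed second difference of $(\uu,\vv)\mapsto K_U(\uu,\vv)$ in its two arguments, so a double-integral representation against the partial derivatives $\partial_{u_k}\partial_{v_l}K_U$, combined with $\|d^{(2)}K_U\|_\infty$, yields the pointwise bound $d^2\,\|d^{(2)}K_U\|_\infty\,\eps_n^2$. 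Averaging over $(i,j)$ and taking a square root give $\Db(\hat\Pb_n, \Pb_n) \leq d\,\|d^{(2)}K_U\|_\infty^{1/2}\,\eps_n$, and doubling and substituting the DKW bound on $\eps_n$ produces exactly the second stochastic term of the theorem. A union bound over the two good events, costing at most $\delta + \nu$ in probability, assembles the full statement.

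The delicate step is the second-difference bound: a naive first-order Lipschitz estimate through $\|K_U(\cdot,\hat\UU_i) - K_U(\cdot,\UU_i)\|_{\Hc_U}$ would force working with the RKHS-valued derivative of $\uu \mapsto K_U(\cdot,\uu)$ and would lose the clean dependence on the scalar second partial derivatives of $K_U$. Exploiting the full quadratic structure of $\Db^2(\hat\Pb_n,\Pb_n)$, which produces a genuine mixed second difference in each summand, is what delivers the $d^2\,\|d^{(2)}K_U\|_\infty$ factor appearing in the theorem.
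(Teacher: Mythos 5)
Your proof is correct and follows the same overall architecture as the paper's: the chain of triangle inequalities reducing the oracle bound to $2\Db(\hat\Pb_n,\Pb_n)+2\Db(\Pb_n,\Pb_0)$, a concentration bound for $\Db(\Pb_n,\Pb_0)$ (you rederive Lemma 1 of \cite{Briol2019} via McDiarmid rather than citing it; your bound $(\sup_{\uu}K_U(\uu,\uu)/n)^{1/2}\{1+(2\ln(1/\delta))^{1/2}\}$ is indeed absorbed, after doubling, into the stated $(8\sup_{\uu}K_U(\uu,\uu)/n)^{1/2}\{1+(\ln(1/\delta))^{1/2}\}$ since $1+\sqrt{2}x\le\sqrt{2}(1+x)$), and DKW plus a union bound over the $d$ coordinates for the pseudo-observation term. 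The one genuinely different step is your treatment of $\Db^2(\hat\Pb_n,\Pb_n)$: you keep each summand as the exact rectangle increment $K_U(\hat\UU_i,\hat\UU_j)-K_U(\hat\UU_i,\UU_j)-K_U(\UU_i,\hat\UU_j)+K_U(\UU_i,\UU_j)$ and represent it as a double integral of the mixed derivative $\partial^2_{12}K_U$ along the segments joining $\UU_i$ to $\hat\UU_i$ and $\UU_j$ to $\hat\UU_j$, which gives the pointwise bound $d^2\|d^{(2)}K_U\|_\infty\,\eps_n^2$. The paper instead performs two successive one-sided Taylor expansions with Lagrange remainders, producing three second-order terms and the weaker bound $2d^2\|d^{(2)}K_U\|_\infty\,\eps_n^2$; after doubling and substituting the DKW bound, that route yields $(4d^2\|d^{(2)}K_U\|_\infty\ln(2d/\nu)/n)^{1/2}$, a factor $\sqrt{2}$ larger than the constant displayed in the theorem, whereas your mixed-second-difference representation delivers exactly the stated $(2d^2\|d^{(2)}K_U\|_\infty\ln(2d/\nu)/n)^{1/2}$ under the same $C^2$ assumption. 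In that sense your argument is the cleaner one: the paper only recovers the sharp $d^2$ leading term in a subsequent remark, at the price of assuming $K_U$ three times differentiable.
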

Note that, if a pseudo-true value $\theta_0^*$ exists, $\inf_{\theta\in\Theta} \Db(\Pb_{\theta},\Pb_0)=\Db(\Pb_{\theta_0^*},\Pb_0)$ by definition, and this quantity is zero if the model is correctly specified.

\begin{proof}
For every $\theta\in \Theta$, we have
\begin{eqnarray*}
\lefteqn{ \Db(\Pb_{\hat\theta_n},\Pb_0) \leq \Db(\Pb_{\hat\theta_n},\hat\Pb_n) + \Db (\hat\Pb_n,\Pb_n) + \Db (\Pb_n,\Pb_0)  }\\
&\leq & \Db(\Pb_{\theta},\hat\Pb_n)+ \Db (\hat\Pb_n,\Pb_n) + \Db (\Pb_n,\Pb_0) \\
&\leq & \Db(\Pb_{\theta},\Pb_0)+ 2\Db (\hat\Pb_n,\Pb_n) + 2\Db (\Pb_n,\Pb_0).
\end{eqnarray*}
With probability greater than $1-\delta$, Lemma 1 in~\cite{Briol2019} yields
\begin{equation}
\Db (\Pb_n,\Pb_0) \leq \bigg\{ \frac{2}{n}\sup_{\uu\in [0,1]^d} K_U(\uu,\uu) \bigg\}^{1/2} \Big\{ 1+ \big(-\ln\delta\big)^{1/2} \Big\}.
\label{T1_distance_finite}
\end{equation}
Moreover, by some limited expansions of $K_U$ wrt each of its arguments, evaluated at $(\UU_i,\UU_j)$ and with matrix notations, we get
\begin{eqnarray*}
\lefteqn{ \Db^2 (\hat\Pb_n,\Pb_n) =
\frac{1}{n^2} \sum_{i,j=1}^n \Big\{ K_U( \UU_i,\UU_j)-2K_U(\hat \UU_i,\UU_j)  +K_U(\hat \UU_i,\hat\UU_j)     \Big\}  }\\
&= &
\frac{1}{n^2} \sum_{i,j=1}^n \Big\{ \partial_1 K_U( \UU_i,\UU_j)^\top (\UU_i - \hat\UU_i) -  \frac{1}{2}(\hat \UU_i - \UU_i)^\top \partial^2_1 K_U( \UU^*_i,\UU_j) (\hat \UU_i - \UU_i) \\
&-&
\partial_2 K_U( \hat\UU_i,\UU_j)^\top ( \UU_j - \hat\UU_j) +  \frac{1}{2}(\hat \UU_j - \UU_j)^\top \partial^2_2 K_U( \hat\UU_i,\tilde\UU_j) (\hat \UU_j - \UU_j) \Big\},
\end{eqnarray*}
for some random vectors $\UU_i^*$ (resp. $\tilde \UU_j$) that lie between $\UU_i$ and $\hat\UU_i$ (resp. between $\UU_j$ and $\hat\UU_j$).
Since the kernel is symmetrical, $ \partial_1 K_U( \uu,\vv) = \partial_2 K_U(\vv,\uu)$ for every $(\uu,\vv)$ in $[0,1]^{2d}$.
This yields, with obvious notations,
\begin{eqnarray*}
\lefteqn{ \Db^2 (\hat\Pb_n,\Pb_n) =
\frac{1}{n^2} \sum_{i,j=1}^n \Big\{ \frac{(-1)}{2}(\hat \UU_i - \UU_i)^\top \partial^2_1 K_U( \UU^*_i,\UU_j) (\hat \UU_i - \UU_i) \nonumber }\\
&-&
(\hat \UU_i - \UU_i)^\top \partial^2_{12} K_U( \bar\UU_i,\UU_j) ( \UU_j - \hat\UU_j) +  \frac{1}{2}(\hat \UU_j - \UU_j)^\top \partial^2_2 K_U( \hat\UU_i,\tilde\UU_j) (\hat \UU_j - \UU_j) \Big\},
\label{Limited_expansion_1}
\end{eqnarray*}
and we deduce
\begin{equation*}
\Db^2 (\hat\Pb_n,\Pb_n)\leq 2d^2 \|d^{(2)} K_U\|_\infty \sup_{i=1,\ldots,n} \sup_{k=1,\ldots,d} |\hat U_{ik} - U_{ik}|^2 .
\end{equation*}

The Dvoretzky-Kiefer-Wolfowitz inequality (p. 383 in~\cite{Boucheron2012}) yields
$$ \Pb\Big( \sup_{i=1,\ldots,n} \sup_{k=1,\ldots,d} |\hat U_{i,k} - U_{i,k}|^2 > \eps  \Big) \leq 2d \exp \big(- 2n\eps \big),$$
and $\Db^2 (\hat\Pb_n,\Pb_n)$ is less than $d^2 \|d^{(2)} K_U\|_\infty \ln(2d/\nu)/n$ with a probability larger than $1-\nu$.
In addition with~(\ref{T1_distance_finite}), this proves the result.
\end{proof}

\begin{Remark}
Note that if an exact minimizer $\hat\theta_n$ of \eqref{RKHS_criterion} does not exist, we can simply define $\hat \theta_n$ as any value that reaches the infimum up to $1/n$. The extension of Theorem 1 to this case is direct.
\end{Remark}

It is possible to slightly strengthen Theorem~\ref{upper_bound_finite_distance} at the price of more regularity for $K_U$, details are provided in Appendix A.

 Let us emphasize the consequences of Theorem~\ref{upper_bound_finite_distance} when the data are contaminated by a proportion $\varepsilon$ of outliers. Huber proposed a contamination model for which $\mathbb{P}_0 = (1-\varepsilon) \mathbb{P}_{\theta_0} + \varepsilon \mathbb{Q}$. That is, while the majority of the observations is actually generated from the ``true'' model, a (small) proportion $\varepsilon$ of them is generated by an arbitrary contamination distribution $\mathbb{Q}$. Using this framework, it is possible to upper bound the distance between the MMD estimator and the true parameter directly. To be short, assume here that
 $ \sup_{\uu\in [0,1]^d} K_U(\uu,\uu) \leq 1 $,
 as for the usual Gaussian kernel.
 Since $ \mathbb{D}(\mathbb{P}_0,\mathbb{P}_{\theta_0}) \leq 2\varepsilon$ and $  \Db(\Pb_{\hat\theta_n},\Pb_{\theta_0})  \leq 2\varepsilon +  \Db(\Pb_{\hat\theta_n},\mathbb{P}_{0})$ by the triangle inequality, Theorem~\ref{upper_bound_finite_distance} yields
\begin{equation}
\label{bound:robustness}
 \Db(\Pb_{\hat\theta_n},\Pb_{\theta_0}) \leq 4\varepsilon +
  \left(\frac{8}{n}\right)^{\frac{1}{2}} \Big\{ 1+ \big(-\ln \delta \big)^{1/2} \Big\}
+ \bigg\{\frac{2d^2}{n} \|d^{(2)} K_U\|_\infty \ln\big(\frac{2d}{\nu}\big)\bigg\}^{1/2}.
\end{equation}
In any model where an upper bound on $\|\hat{\theta}_n-\theta_0\|^2$ can be deduced from an upper bound on $\Db(\Pb_{\hat\theta_n},\Pb_{\theta_0})$, this proves the robustness of $\hat{\theta}_n$.

\begin{Example}
\label{exm:Gaussian:robustness}
 As an illustration, let us consider the Gaussian copula model in dimension $d=2$, whose laws $(\Pb_{\theta})_{\theta\in(-1,1)}$ are given by their density
\begin{equation}
 c_\theta(u_1,u_2)= \frac{1}{2\pi\sqrt{1-\theta^2}\phi(x_1) \phi(x_2)} \exp\Big(-\frac{1}{2(1-\theta^2)} \big( x_1^2+x_2^2 - 2\theta x_1 x_2  \big)   \Big),
\label{def_cop_gauss}
\end{equation}
by setting $ x_k=\Phi^{-1}(u_k)$, $k=1,2$. We use the Gaussian kernel:
$$
K_U(\UU,\VV) = \exp\left\{ -\|\Phi^{-1}(\UU)-\Phi^{-1}(\VV)\|^2/\gamma^2 \right\}, $$
where $\Phi$ is the c.d.f of a standard Gaussian random variable, and its inverse $\Phi^{-1}$ is applied coordinatewise. We prove at the end of Appendix F 
that, using the latter Gaussian kernel, there is a constant $c(\gamma)\in(0,+\infty)$ that depends only on $\gamma$ such that, for any $(\theta_1,\theta_2)\in(-1,1)^2$,
$|\theta_1-\theta_2| \leq c(\gamma) \Db(\Pb_{\theta_1},\Pb_{\theta_2}).$
Together with~\eqref{bound:robustness}, this gives:
\begin{equation*}
|\hat{\theta}_n-\theta_0| \leq c(\gamma) \left[ 4 \varepsilon +
   \left(\frac{8}{n}\right)^{\frac{1}{2}} \Big\{ 1+ \big(-\ln \delta \big)^{1/2} \Big\}
+ \bigg(\frac{8}{n} \|d^{(2)} K_U\|_\infty \ln\big(\frac{4}{\nu}\big)\bigg)^{1/2}\right].
\end{equation*}
\end{Example}

In the general case, we can use the following proposition.
\begin{Proposition}
\label{prop_TAYLOR}
 Assume that the map $\theta\mapsto  \Db^2(\Pb_{\theta},\Pb_{\theta_0})$ is
 twice continuously differentiable in a neighborhood of $\theta_0$.
Denoting by $\lambda_{\min}(\theta)$ the smallest eigenvalue of $\nabla^2_{\theta,\theta} \Db^2(\Pb_{\theta},\Pb_{\theta_0})$, assume that
$\lambda_{\min}(\theta)\geq \lambda_{\min}(\theta_0)/2>0$ when $\|\theta - \theta_0\|<r$, for some $r>0$.
Set $\alpha=\inf_{\{\theta; \|\theta - \theta_0\|\geq r\}} \Db(\Pb_{\theta},\Pb_{\theta_0})$ and assume $\alpha >0$.\\
Then, for any contamination distribution $\Qb$, when the data are drawn from $(1-\varepsilon)\Pb_{\theta_0} + \varepsilon \Qb$ for some $\varepsilon\in[0,\alpha/8]$, for any $\nu>0$ and $\delta>0$ with $\nu+\delta<1$, as soon as
 $$ \sqrt{n}\alpha \geq
 \bigg\{ 32\sup_{\uu\in [0,1]^d} K_U(\uu,\uu)  \bigg\}^{1/2} \Big\{ 1 + \big(-\ln \delta \big)^{1/2} \Big\}
  +\bigg\{ 8 \,d^2 \|d^{(2)} K_U\|_\infty \ln\Big(\frac{2d}{\nu}\Big)\bigg\}^{1/2},$$
 we have, with probability at least $1-\nu-\delta$,
 \begin{equation*}
\|\hat{\theta}_n-\theta_0\| \leq \frac{2}{\sqrt{\lambda_{\min}(\theta_0)}} \bigg[ 4 \varepsilon +
   \Big\{\frac{8}{n}\sup_{\uu\in [0,1]^d} K_U(\uu,\uu) \Big\}^{\frac{1}{2}} \Big\{ 1+ \big(-\ln \delta \big)^{\frac{1}{2}} \Big\}
+ \Big\{\frac{2 d^2}{n} \|d^{(2)} K_U\|_\infty \ln\big(\frac{2d}{\nu}\big)\Big\}^{\frac{1}{2}}\bigg].
\end{equation*}
\end{Proposition}
The proof is provided in Appendix B
.

\subsection{Asymptotic guarantees}
\label{Asymptotic_results}


We denote
\begin{align*}
\ell(\ww;\theta) =
\int K_U(\uu,\vv) \Pb_\theta (d\uu) \, \Pb_\theta (d\vv) - 2 \int K_U(\uu,\ww) \Pb_\theta (d\uu) .
\end{align*}
We assume that the functions $\ell(\cdot;\theta)$ are measurable and $\Pb_0$-integrable for every $\theta\in\Theta$.
The theoretical loss function is
$$
L_0(\theta)=\mathbb{E}[ \ell(\UU;\theta) ] = \int_{[0,1]^d} \ell(\ww;\theta) \Pb_0(d\ww) .
$$
Here, it is approximated by the empirical ``feasible'' loss
$$L_n(\theta)=\frac{1}{n}\sum_{i=1}^n \ell(\hat \UU_i;\theta) = \int_{[0,1]^d} \ell(\ww;\theta) \hat{\Pb}_n(d\ww) , $$
so that $\hat{\theta}_n \in \arg\min_{\theta\in\Theta} L_n(\theta)$ and $\theta^*_0 \in \arg\min_{\theta\in\Theta} L_0(\theta)$.
The asymptotic properties of M-estimators (``Quasi-MLE'' particularly) for possibly misspecified models are well established in the literature: see~\cite{White82, White94} for instance.
As usual in the statistical theory of copulas, the main difficulty will come here from unspecified margins.

\subsubsection{Consistency}

Under classical assumptions, we prove that the MMD estimator is consistent.

\begin{Condition}
The parameter space $\Theta$ is compact.
The map $L_0:\Theta\rightarrow \Rb$ is continuous on $\Theta$ and uniquely minimized at $\theta_0^*$.
\label{cond_consistency1}
\end{Condition}

\begin{Condition}
The family $\Fc=\{ \ell(\cdot,\theta);\, \theta\in \Theta\}$ is a collection of measurable functions with an integrable envelope function $F$.
For every $\ww\in [0,1]^d$, the map $\theta \mapsto \ell(\ww;\theta)$ is continuous on $\Theta$.
\label{cond_consistency2}
\end{Condition}

\begin{Theorem}
\label{thm_consistency}
If Conditions \ref{cond_consistency1} and \ref{cond_consistency2} are fulfilled, then $\hat{\theta}_n$ is strongly consistent, i.e.
$$\hat{\theta}_n \xrightarrow[n\rightarrow+\infty]{\Pb_0 - a.s.} \theta_0^* .$$
\end{Theorem}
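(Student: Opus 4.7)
The plan is to follow the classical argmin-consistency route for M-estimators: establish a uniform law of large numbers $\sup_{\theta\in\Theta}|L_n(\theta)-L_0(\theta)|\to 0$ almost surely, and then invoke the fact that $L_0$ is continuous on the compact $\Theta$ with unique minimizer $\theta_0^*$ to deduce $\hat\theta_n\to\theta_0^*$ almost surely. The only genuinely new issue, compared with a standard Wald-type argument, is that $L_n$ is built from the pseudo-observations $\hat\UU_i$ rather than from the unobservable $\UU_i$, so we have to bridge these two empirical criteria.

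Introduce the ``oracle'' loss $\tilde L_n(\theta):=n^{-1}\sum_{i=1}^n \ell(\UU_i;\theta)$ and split
\[
\sup_{\theta\in\Theta}|L_n(\theta)-L_0(\theta)|
\le \sup_{\theta\in\Theta}|L_n(\theta)-\tilde L_n(\theta)|
+ \sup_{\theta\in\Theta}|\tilde L_n(\theta)-L_0(\theta)|.
\]
For the second term, since the $\UU_i$ are i.i.d.\ with law $\Pb_0$, Conditions~\ref{cond_consistency1} and~\ref{cond_consistency2} (compactness of $\Theta$, continuity of $\theta\mapsto\ell(\ww;\theta)$, and an integrable envelope) put us exactly in the setting of the uniform strong law of large numbers (e.g.\ Jennrich's or Wald's classical argument). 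Hence $\sup_{\theta\in\Theta}|\tilde L_n(\theta)-L_0(\theta)|\to 0$ almost surely.

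The substantive step is to show that the ``pseudo-vs-true'' gap $\sup_{\theta}|L_n(\theta)-\tilde L_n(\theta)|$ vanishes almost surely. Write
\[
\ell(\ww;\theta)=A(\theta)-2\,m_\theta(\ww),\qquad m_\theta(\ww):=\int K_U(\uu,\ww)\,\Pb_\theta(d\uu),
\]
so that $\ell(\hat\UU_i;\theta)-\ell(\UU_i;\theta)=-2\bigl(m_\theta(\hat\UU_i)-m_\theta(\UU_i)\bigr)$ and
\[
\bigl|L_n(\theta)-\tilde L_n(\theta)\bigr|
\le \frac{2}{n}\sum_{i=1}^n \bigl|m_\theta(\hat\UU_i)-m_\theta(\UU_i)\bigr|.
\]
The kernel $K_U$ is continuous on the compact $[0,1]^d\times[0,1]^d$, hence uniformly continuous there; in particular, for every $\eps>0$ there exists $\eta>0$ such that $\|\ww-\ww'\|\le\eta$ implies $|K_U(\uu,\ww)-K_U(\uu,\ww')|\le\eps$ for all $\uu\in[0,1]^d$. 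Integrating in $\uu$ against $\Pb_\theta$ yields the same modulus of continuity for $m_\theta$, \emph{uniformly in} $\theta\in\Theta$. By the Glivenko--Cantelli theorem (or the DKW inequality combined with Borel--Cantelli), $\max_{i\le n}\max_{k\le d}|\hat U_{i,k}-U_{i,k}|\to 0$ almost surely, so eventually this maximum is $\le\eta$ and consequently $\sup_{\theta\in\Theta}\sup_{i\le n}|m_\theta(\hat\UU_i)-m_\theta(\UU_i)|\le\eps$. Letting $\eps\downarrow 0$ gives $\sup_{\theta\in\Theta}|L_n(\theta)-\tilde L_n(\theta)|\to 0$ almost surely. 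This is the step I expect to be the main obstacle, since it is the only point where the semi-parametric nature of the problem truly shows up; everything rests on combining the uniform continuity of $K_U$ on a compact set with the Glivenko--Cantelli behaviour of the pseudo-observations.

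Combining the two bounds yields $\sup_{\theta\in\Theta}|L_n(\theta)-L_0(\theta)|\to 0$ almost surely. By Condition~\ref{cond_consistency1}, for every open neighborhood $V$ of $\theta_0^*$ we have $\inf_{\theta\in\Theta\setminus V} L_0(\theta)>L_0(\theta_0^*)$, and the standard M-estimator argument (using $L_n(\hat\theta_n)\le L_n(\theta_0^*)$) then forces $\hat\theta_n\in V$ for all $n$ large enough, almost surely. Hence $\hat\theta_n\to\theta_0^*$ almost surely, which is the claimed strong consistency.
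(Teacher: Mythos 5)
Your overall architecture (uniform strong LLN plus the standard argmin-consistency step) is the same as the paper's, and the final step via the identifiability/compactness argument is fine. Where you genuinely diverge is in how the pseudo-observations are handled. The paper does this in one shot: compactness of $\Theta$, continuity in $\theta$ and the integrable envelope give finite $L^1(\Pb_0)$-bracketing numbers for $\{\ell(\cdot;\theta):\theta\in\Theta\}$ (Example 19.8 in \cite{VdV98}), and then Lemma 1(c) of \cite{Chen2005} — a uniform SLLN tailored to rank-based pseudo-observations — directly yields $\sup_\theta|L_n(\theta)-L_0(\theta)|\to0$ a.s. You instead insert the oracle criterion $\tilde L_n$ built from the true $\UU_i$ and control the gap $L_n-\tilde L_n$ by hand, using the structure $\ell(\ww;\theta)=A(\theta)-2m_\theta(\ww)$ together with uniform continuity of $K_U$ and the Glivenko--Cantelli bound $\max_{i,k}|\hat U_{i,k}-U_{i,k}|\le\max_k\|F_{n,k}-F_k\|_\infty\to0$. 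When it applies, your argument is more elementary and self-contained than an appeal to Chen--Fan.

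The catch is that your key step relies on a hypothesis that is not among Conditions~\ref{cond_consistency1}--\ref{cond_consistency2}: joint (hence uniform) continuity of $K_U$ on the closed cube $[0,1]^d\times[0,1]^d$, or at least equicontinuity of the family $\{K_U(\uu,\cdot)\}_{\uu}$. The standing assumptions of Section 2 only require $K_U$ to be symmetric and bounded, and the paper's preferred kernel $K_U(\uu,\vv)=\exp(-\|\Phi^{-1}(\uu)-\Phi^{-1}(\vv)\|^2/\gamma^2)$ is \emph{not} continuous up to the boundary of the cube (its limit as both arguments approach a face depends on the path, and $\partial_{w_k}K_U$ blows up like $1/\phi(\Phi^{-1}(w_k))$ near $w_k\in\{0,1\}$), so the family $\{K_U(\uu,\cdot)\}_\uu$ is not equicontinuous and the step ``integrating in $\uu$ yields the same modulus of continuity for $m_\theta$, uniformly in $\theta$'' does not go through for that kernel. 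Since Propositions~\ref{AN_Gaussian_cop} and~\ref{AN_MO} apply the consistency result precisely with this $\Phi^{-1}$-based kernel, your proof as written covers a strictly smaller set of cases than the theorem is meant to (it is fine for the plain Gaussian kernel with $h=\mathrm{id}$). To close the gap you would either need to add continuity of $K_U$ on the closed cube as an explicit hypothesis, or replace the uniform-continuity argument by the bracketing-number route of \cite{Chen2005}, which only needs the $L^1(\Pb_0)$ brackets guaranteed by Conditions~\ref{cond_consistency1}--\ref{cond_consistency2}.
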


\begin{proof}
As $\Theta$ is compact, then the $\delta$-bracketing numbers $\mathcal{N}_{[\cdot]}\big(\delta,\Fc,L^1(\Pb_0)\big)$ are finite for every $\delta>0$, invoking Example 19.8 in~\cite{VdV98}.
Moreover, using Lemma 1(c) in~\cite{Chen2005}, we obtain the strong uniform law of large numbers 
$$
\sup_{\theta\in\Theta} |L_0(\theta) - L_n(\theta)| \xrightarrow[n\rightarrow+\infty]{\Pb_0 - a.s.} 0 .
$$
Hence, according to Theorem 2.1 in~\cite{NeweyMcFadden} for example, we deduce the strong consistency of the minimizer $\hat{\theta}_n$ of $L_n$ towards the unique minimizer of $L_0$.
\end{proof}

\subsubsection{Asymptotic normality}

Although Theorem \ref{thm_consistency} gives conditions under which we obtain the consistency of the MMD estimator, it does not provide any information on its rate of convergence. Hence, we now state the weak convergence of $\sqrt{n}(\hat\theta_n-\theta_0^*)$.
First, we need a set of usual regularity conditions to deal with M-estimators. It mainly requires the functions $\ell(\ww;\cdot)$ to be smooth enough on a small neighborhood of $\theta_0^*$ when
$\ww\in[0,1]^d$.

\begin{Condition}
$\theta_0^*$ is an interior point of $\Theta$.
\label{cond_normality_0}
\end{Condition}


\begin{Condition}
There exists an open neighborhood $\mathcal{O}\subset\Theta$ of $\theta_0^*$ such that the maps $\theta \mapsto \ell(\ww;\theta)$ are twice continuously differentiable on $\Oc$, for $\Pb_0$-almost every $\ww\in[0,1]^d$.
 Moreover, all functions $\nabla^2_{\theta,\theta}\ell(\cdot;\theta)$ are measurable on $[0,1]^d$ for any $\theta\in\mathcal{O}$.
 \label{cond_normality_1}
\end{Condition}

\begin{Condition}
There exists a compact set $K_0\subset\mathcal{O}$ whose interior contains $\theta_0^*$ such that 
$$\Eb\bigg[  \sup_{\theta\in K_0} \big\|\nabla_{\theta,\theta}^2\ell(\UU;\theta) \big\| \bigg] < +\infty,$$
for any  matrix norm $\|\cdot\|$.
Moreover, the map $\theta\mapsto \mathbb{E}[\nabla^2_{\theta,\theta}\ell(\UU;\theta)]$ is continuous at $\theta_0^*$.
\label{cond_normality_2}
\end{Condition}

\begin{Condition}
The matrix $B=\mathbb{E}[\nabla_{\theta,\theta}^2 \ell(\UU;\theta_0^*) ]$ is positive definite.
\label{cond_normality_3}
\end{Condition}

\begin{Condition}
$\mathbb{E}[\nabla_\theta\ell(\UU;\theta_0^*)]=0$.
\label{cond_normality_plus}
\end{Condition}





Second, the asymptotic behavior of our estimator is closely related to the asymptotic distribution of the empirical copula that has been widely studied in the last two decades. 
The weak convergence in $(\ell^{\infty}([0,1]^d),\|\cdot\|_{\infty})$ of the empirical copula process $\{\sqrt{n}(\hat\Pb_n-\Pb_0)(\uu), \uu \in [0,1]^d \}$ to a Gaussian process was formally stated by \cite{JD2004}, by requiring the first-order partial derivatives of the copula $\Pb_0$ to exist and to be continuous on the entire unit hypercube $[0,1]^d$.
Actually, as initially suggested in Theorem 4 of~\cite{JD2004}, the continuity is not needed on the boundary of the hypercube, but only on the interior of the hypercube. This result was established by \cite{Segers2012} under minimal assumptions, rewritten below as Condition~\ref{cond_normality_4_bis}.
With additional smoothness requirements on the loss function $\ell$ (Condition~\ref{cond_normality_5_bis}), we will be able to obtain the asymptotic normality of our MMD estimator $\hat{\theta}_n$ from the weak convergence of the empirical copula process.

\begin{Condition}
The function $\nabla_{\theta}\ell(\cdot;\theta_0^*)$ is right continuous, i.e. it is coordinatewise right-continuous in each coordinate, and is of bounded variation in the sense of Hardy-Krause (see~\cite{Radoluvic}, Section 2).
\label{cond_normality_5_bis}
\end{Condition}

\begin{Condition}
For each $j\in \{1,\ldots,d\}$, the $j$-th first-order partial derivative $\dot{C}_j$ of the true copula $\Pb_0$ exists and is continuous on the set $V_j=\{\ww\in[0,1]^d: 0< w_j<1\}$.
\label{cond_normality_4_bis}
\end{Condition}

Still, it is possible to obtain the weak convergence of the empirical copula process for an even larger class of copulas using semi-metrics on $\ell^{\infty}([0,1]^d)$ that are weaker than the sup-norm, but the limiting distribution will no longer be Gaussian in general. Indeed, Segers~\cite{Segers2014} established the hypi-convergence of the empirical copula process $\{\sqrt{n}(\hat\Pb_n-\Pb_0)(\uu), \uu \in [0,1]^d \}$ under the following assumption that is weaker than Condition~\ref{cond_normality_4_bis}.


\begin{Condition}
The set $\mathcal{S}$ of points in $[0,1]^d$ where the partial derivatives of the true copula $\Pb_0$ exist and are continuous has Lebesgue measure $1$.
\label{cond_normality_4_ter}
\end{Condition}

Note a related regularity assumption in~\cite{Genest2017}, Condition 1.
Hereafter, $(\ww_I,\1_{-I})$ denotes a vector in $[0,1]^d$ whose $j$-th component is $w_j$ when $j\in I$ and is one otherwise.

\begin{Condition}
For any $I\subset \{1,\ldots,d\}$, $I\neq \emptyset$, there exists some $q_I\in (1,+\infty)$ such that
 $ \int_{[0,1]^{|I|}}  \big| \nabla_\theta\ell(d\ww_I,\1_{-I};\theta_0^*) \big|^{q_I} < \infty.$
\label{cond_moment_dell}
\end{Condition}

Now, let us state the weak convergence of $\sqrt{n}(\hat\theta_n - \theta_0^*)$.
\begin{Theorem}
\label{Th_AN}
If Conditions \ref{cond_consistency1}-\ref{cond_normality_4_bis}
are fulfilled, then $\sqrt{n}(\hat\theta_n - \theta_0^*)$ is asymptotically normal.
Alternatively, under Conditions \ref{cond_consistency1}-\ref{cond_normality_5_bis} and \ref{cond_normality_4_ter}-\ref{cond_moment_dell},
the weak limit of $\sqrt{n}\big(\hat{\theta}_n - \theta_0^*\big)$ still exists.
\end{Theorem}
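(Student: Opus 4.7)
The plan is to follow the classical M-estimator recipe — first-order condition plus Taylor expansion — but the main work will be in handling the pseudo-observations in the score, which forces us to study the empirical copula process applied to $\nabla_\theta \ell(\cdot;\theta_0^*)$ rather than a standard empirical process.

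Step 1 (localization and Taylor expansion). By Theorem~\ref{thm_consistency} we have $\hat\theta_n \to \theta_0^*$ a.s., and since $\theta_0^*$ is interior (Condition~\ref{cond_normality_0}), eventually $\hat\theta_n$ lies in the interior of the compact $K$ of Condition~\ref{cond_normality_2}. On this event the first-order condition $\nabla_\theta L_n(\hat\theta_n)=0$ holds, so a coordinatewise mean-value expansion gives
\begin{equation*}
0 \;=\; \nabla_\theta L_n(\theta_0^*) \;+\; H_n \cdot (\hat\theta_n-\theta_0^*),
\end{equation*}
where each row of $H_n$ is $\nabla^2_{\theta,\theta}\ell(\hat\UU_i;\tilde\theta_{n,i})$ averaged over $i$, for intermediate points $\tilde\theta_{n,i}$ on the segment between $\theta_0^*$ and $\hat\theta_n$. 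Using Condition~\ref{cond_normality_2} (dominated envelope on $K$) together with the uniform law of large numbers (as in the proof of Theorem~\ref{thm_consistency}) and the continuity at $\theta_0^*$, a standard argument shows $H_n \to B = \mathbb{E}[\nabla^2_{\theta,\theta}\ell(\UU;\theta_0^*)]$ in probability; the replacement of $\UU_i$ by $\hat\UU_i$ costs only a $o_{\Pb}(1)$ term because $\sup_{i,k}|\hat U_{i,k}-U_{i,k}| \to 0$ by the DKW inequality and the Hessian is bounded uniformly on $K\times[0,1]^d$. Condition~\ref{cond_normality_3} then gives invertibility for large $n$, and we obtain
\begin{equation*}
\sqrt{n}(\hat\theta_n - \theta_0^*) \;=\; -B^{-1}\sqrt{n}\,\nabla_\theta L_n(\theta_0^*) \;+\; o_\Pb(1).
\end{equation*}

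Step 2 (empirical copula process representation of the score). Write
\begin{equation*}
\sqrt{n}\,\nabla_\theta L_n(\theta_0^*) \;=\; \sqrt{n}\int \nabla_\theta\ell(\ww;\theta_0^*)\,(\hat\Pb_n - \Pb_0)(d\ww),
\end{equation*}
using Condition~\ref{cond_normality_plus}. Here $\hat\Pb_n$ is the empirical copula, so this is a linear functional of the empirical copula process $\mathbb{C}_n := \sqrt{n}(\hat\Pb_n-\Pb_0)$. Under Condition~\ref{cond_normality_5_bis} (right-continuity and bounded variation), integration by parts on the hypercube expresses this functional as
\begin{equation*}
\sqrt{n}\int\nabla_\theta\ell(\ww;\theta_0^*)(\hat\Pb_n-\Pb_0)(d\ww) \;=\; \text{(boundary terms)} \;\pm\; \int \mathbb{C}_n(\ww)\,\nabla_\theta\ell(d\ww;\theta_0^*),
\end{equation*}
which is a continuous linear functional of $\mathbb{C}_n$ in $\ell^\infty([0,1]^d)$ equipped with the sup-norm on the first statement, and in the hypi-semi-metric topology on the second.

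Step 3 (invoking the limits of the empirical copula process). Under Condition~\ref{cond_normality_4_bis}, the Segers~\cite{Segers2012} result gives that $\mathbb{C}_n$ converges weakly in $(\ell^\infty([0,1]^d),\|\cdot\|_\infty)$ to a tight Gaussian process $\mathbb{C}_{\Pb_0}$ of the usual form involving the partial derivatives $\dot C_j$. The continuous mapping theorem, applied to the bounded linear functional obtained in Step~2, then yields
\begin{equation*}
\sqrt{n}\,\nabla_\theta L_n(\theta_0^*) \;\Rightarrow\; \mathcal{N}\big(0,\Sigma\big),
\end{equation*}
and by Step~1, $\sqrt{n}(\hat\theta_n-\theta_0^*) \Rightarrow \mathcal{N}(0,B^{-1}\Sigma B^{-1})$. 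For the alternative statement (Conditions~\ref{cond_normality_4_ter}–\ref{cond_moment_dell}), the hypi-convergence result of~\cite{Segers2014} provides a (non-Gaussian in general) limit for $\mathbb{C}_n$; Condition~\ref{cond_moment_dell} is the $L^q$-integrability requirement that makes the same integration-by-parts functional continuous with respect to the hypi-topology, whence the existence of the weak limit of $\sqrt{n}(\hat\theta_n-\theta_0^*)$.

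The main obstacle is Step~2/3: carefully justifying that the functional $\psi \mapsto \int \psi\,d\nabla_\theta\ell(\cdot;\theta_0^*)$ is continuous in the relevant topology (sup-norm under Condition~\ref{cond_normality_4_bis}, hypi under Condition~\ref{cond_normality_4_ter}) and that the boundary terms from integration by parts are either zero or are absorbed into the Gaussian/hypi-limit. This is where the bounded-variation assumption in Condition~\ref{cond_normality_5_bis} and the $L^q$ moment in Condition~\ref{cond_moment_dell} play their decisive role; the rest of the argument is a standard delta-method-type computation.
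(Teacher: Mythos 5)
Your proposal is correct and follows essentially the same route as the paper's proof: a first-order condition with a mean-value expansion, convergence of the Hessian to $B$ via the uniform law of large numbers for pseudo-observations, integration by parts under Condition~\ref{cond_normality_5_bis} to rewrite the score as a continuous linear functional of the empirical copula process, and then the weak convergence results of Segers (sup-norm under Condition~\ref{cond_normality_4_bis}, hypi/$L_p$ under Conditions~\ref{cond_normality_4_ter}--\ref{cond_moment_dell}) combined with the continuous mapping theorem and Slutsky. One small caution: your side remark that the pseudo-observation replacement in the Hessian is controlled because the Hessian is ``bounded uniformly on $K\times[0,1]^d$'' assumes more than Condition~\ref{cond_normality_2} grants (only an integrable envelope is assumed); the correct justification, which you also cite, is the Chen--Fan-type uniform LLN for rank statistics used in the proof of Theorem~\ref{thm_consistency}.
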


\begin{proof}
According to Condition~\ref{cond_normality_1}, $L_n$ is twice differentiable on a neighborhood of $\theta_0^*$ and
$\partial L_n/\partial\theta_j = n^{-1} \sum_{i=1}^n \partial\ell(\hat \UU_i;\cdot)/\partial\theta_j$.
Moreover, due to the the consistency of $\hat\theta_n$ (according to Conditions \ref{cond_consistency1} and \ref{cond_consistency2}), we can assume that
$\hat\theta_n$ belongs to such a neighborhood. Using Condition \ref{cond_normality_0}, the first-order condition is
\begin{equation}
    0 =  \nabla_\theta L_n(\hat\theta_n) = \nabla_\theta L_n(\theta_0^*) + \nabla_{\theta,\theta^\top} L_n(\bar\theta_n) (\hat\theta_n-\theta_0^*),
\label{matrix_equation}
\end{equation}
where $\bar\theta_{n}$ is a random vector whose components lie between those of $\theta_0^*$ and $\hat\theta_n$.
Note that $H_n=\nabla_{\theta,\theta^\top} L_n(\bar\theta_n)$ is an $(d,d)$-sized Hessian matrix whose $(j,k)$-th component is
$ H_{n,jk}=\frac{1}{n}\sum_{i=1}^n \partial^2\ell(\hat \UU_i;\bar\theta_{n})/\partial\theta_k\partial\theta_j$, $j,k \in \{1,\ldots,d\}$
Let us now study the asymptotic behavior of this Hessian matrix and of $\nabla_\theta L_n(\theta_0^*)$.

\mds

For any pair $(j,k)$, the function $\partial^2\ell(\ww;\cdot)/\partial\theta_j\partial\theta_k$ is continuous on the compact set $K$ for $\Pb_0$ almost every $\ww\in[0,1]^d$, all second-order functions $\partial^2\ell(\cdot;\theta)/\partial\theta_j\partial\theta_k$ are measurable for any $\theta\in K$ and $\mathbb{E}[\sup_{\theta\in K}|\partial^2\ell(\UU;\theta)/\partial\theta_k\partial\theta_j|]<+\infty$
(Conditions \ref{cond_normality_1} and \ref{cond_normality_2}).
Therefore, the $L^1$ bracketing numbers associated to the Hessian maps indexed by $\theta \in K$ are finite, invoking Example 19.8 in~\cite{VdV98}.
Using Lemma 1(c) in~\cite{Chen2005}, we get 
$$
\sup_{\theta\in K} \bigg|\frac{1}{n} \sum_{i=1}^n \frac{\partial^2\ell(\hat\UU_i;\theta)}{\partial\theta_k\partial\theta_j} - \mathbb{E}\bigg[\frac{\partial^2\ell(\UU;\theta)}{\partial\theta_k\partial\theta_j}\bigg]\bigg| \xrightarrow[n\rightarrow+\infty]{\Pb_0 - a.s.} 0 .
$$
As $\bar\theta_{n}$ lies between $\hat\theta_n$ and $\theta_0^*$ componentwise, $\bar\theta_{n} \xrightarrow[n\rightarrow+\infty]{\Pb_0 - a.s.} \theta_0^*$.
Moreover, taking expectations with respect to $(\UU, \bar \theta_n)$ or $(\hat\UU, \bar \theta_n)$ respectively, we have for $n$ large enough
\begin{align*}
\bigg| & \frac{1}{n}\sum_{i=1}^n \frac{\partial^2\ell(\hat \UU_i;\bar\theta_{n})}{\partial\theta_k\partial\theta_j} - \mathbb{E}\bigg[\frac{\partial^2\ell(\UU;\theta_0^*)}{\partial\theta_k\partial\theta_j}\bigg] \bigg| \\
& \leq \bigg| \frac{1}{n}\sum_{i=1}^n \frac{\partial^2\ell(\hat \UU_i;\bar\theta_{n})}{\partial\theta_k\partial\theta_j} - \mathbb{E}\bigg[\frac{\partial^2\ell(\UU;\bar\theta_{n})}{\partial\theta_k\partial\theta_j}\bigg] \bigg| + \bigg| \mathbb{E}\bigg[\frac{\partial^2\ell(\UU;\bar\theta_{n})}{\partial\theta_k\partial\theta_j}\bigg] - \mathbb{E}\bigg[\frac{\partial^2\ell(\UU;\theta_0^*)}{\partial\theta_k\partial\theta_j}\bigg] \bigg|  \\
& \leq \, \sup_{\theta\in K} \, \bigg|\frac{1}{n} \sum_{i=1}^n \frac{\partial^2\ell(\hat\UU_i;\theta)}{\partial\theta_k\partial\theta_j} - \mathbb{E}\bigg[\frac{\partial^2\ell(\UU;\theta)}{\partial\theta_k\partial\theta_j}\bigg]\bigg|  + \bigg| \mathbb{E}\bigg[\frac{\partial^2\ell(\UU;\bar\theta_{n})}{\partial\theta_k\partial\theta_j}\bigg] - \mathbb{E}\bigg[\frac{\partial^2\ell(\UU;\theta_0^*)}{\partial\theta_k\partial\theta_j}\bigg] \bigg|.
\end{align*}
The continuity of $\mathbb{E}[\partial^2\ell(\UU;\cdot)/\partial\theta_j\partial\theta_k]$ at $\theta_0^*$ (Condition \ref{cond_normality_1}) yields
$$
\frac{1}{n}\sum_{i=1}^n \frac{\partial^2\ell(\hat \UU_i;\bar\theta_{n})}{\partial\theta_k\partial\theta_j} \xrightarrow[n\rightarrow+\infty]{\Pb_0 - a.s.} \mathbb{E}\bigg[\frac{\partial^2\ell(\UU;\theta_0^*)}{\theta_j\theta_k}\bigg]  .
$$
Finally, by definition of $H_n$ and $B$ (see Condition \ref{cond_normality_3}), we obtain $H_n \xrightarrow[n\rightarrow+\infty]{\Pb_0 - a.s.} B$.

\mds

According to Proposition 3.1 in \cite{Segers2012} and under Condition \ref{cond_normality_4_bis}, the empirical copula process $\sqrt{n} (\hat{\Pb}_n-\Pb_0)$ weakly converges to the Gaussian process $\alpha(\ww)-\sum_{j=1}^d\dot{C}_j(\ww)\alpha_j(\ww_j)$ in $\ell^{\infty}([0,1]^d)$ where $\alpha$ is a $\Pb_0$-Brownian bridge.
By Condition \ref{cond_normality_5_bis} and an integration by parts argument (Proposition 3 in~\cite{Radoluvic}), we have with obvious notations
\begin{eqnarray}
\lefteqn{ \sqrt{n}\big\{\nabla_\theta L_n(\theta_0^*)-\mathbb{E}[\nabla_\theta\ell(\UU;\theta_0^*)] \big\}
=\sqrt{n}\int_{(0,1]^d} \nabla_\theta\ell(\ww;\theta_0^*) d(\hat{\Pb}_n-\Pb_0)(\ww)  \nonumber}\\
&=&
\sum_{I\subset \{1,\ldots,d\};I\neq \emptyset}
(-1)^{|I|} \int_{(\0_I,\1_I]}  \sqrt{n}(\hat{\Pb}_n-\Pb_0)(\ww_I,\1_{-I})\, \nabla_\theta\ell(d\ww_I,\1_{-I};\theta_0^*).\hspace{5cm}
\label{Ibyparts}
\end{eqnarray}
Since all the maps $\ww_I\mapsto \nabla_\theta\ell(\ww_I,\1_{-I};\theta_0^*)$ are of bounded variation, the maps
$$ g\mapsto \int_{(\0_I,\1_I]} g(\ww_I)\, \nabla_\theta\ell(d\ww_I,\1_{-I};\theta_0^*)$$
are continuous on $\ell^{\infty}\big([0,1]^{|I|},\|\cdot\|_\infty\big)$ for any $I\neq \emptyset$.
Recalling Condition \ref{cond_normality_plus}, the continuous mapping theorem implies that the weak limit of
$\sqrt{n}\nabla_\theta L_n(\theta_0^*)$ exists, is centered and Gaussian:
\begin{eqnarray*}
\lefteqn{
\sqrt{n}\nabla_\theta L_n(\theta_0^*) \xrightarrow[n\rightarrow+\infty]{\mathcal{L}} }\\
&&
\sum_{I\subset \{1,\ldots,d\};I\neq \emptyset}
(-1)^{|I|} \int_{(\0_I,\1_I]}  \Big\{\alpha(\ww_I,\1_{-I})-\sum_{j\in I}\dot{C}_j(\ww_I,\1_{-I})\alpha_j(\ww_j)\Big\} \,
\nabla_\theta\ell(d\ww_I,\1_{-I};\theta_0^*).
\end{eqnarray*}
Invoking the integration by parts again, this yields
$$
\sqrt{n}\nabla_\theta L_n(\theta_0^*) \xrightarrow[n\rightarrow+\infty]{\mathcal{L}} \int \nabla_{\theta}\ell(\ww;\theta_0^*) d\Big\{\alpha(\ww)-\sum_{j\in I}\dot{C}_j(\ww)\alpha_j(\ww_j)\Big\}  .
$$
As the limiting matrix $B$ is invertible, we can infer that the matrix $H_n$ is a.s. invertible for a sufficiently large $n$.
Using Slutsky's lemma and Formula \eqref{matrix_equation}, we get
$$
\sqrt{n}(\hat\theta_n - \theta_0^*) = H_n^{-1} \sqrt{n}\nabla_\theta L_n(\theta_0^*) \xrightarrow[n\rightarrow+\infty]{\mathcal{L}} B^{-1} \int
\nabla_{\theta}\ell(\ww;\theta_0^*)\,d\Big\{\alpha(\ww)-\sum_{j=1}^d\dot{C}_j(\ww)\alpha_j(\ww_j)\Big\}.
$$

\mds
If Condition \ref{cond_normality_4_bis} is replaced by Condition \ref{cond_normality_4_ter}, then the empirical process $\sqrt{n} (\hat{\Pb}_n-\Pb_0)$ weakly converges to the process $\alpha(\ww) + dC_{(-\alpha_1,...,-\alpha_d)}(\ww)$ in $L_p([0,1]^d)$ for any $1\leq p < \infty$, as detailed in~\cite{Segers2014} (Theorem 4.5. and the remarks that follow).
Due to Condition~\ref{cond_moment_dell} and H\"older's inequality, the maps $h \rightarrow \int h(\ww_I) \,\nabla_\theta\ell(d\ww_I,\1_{-I};\theta_0^*) $ are continuous on $L_{p_I}([0,1]^{|I|})$, $1/p_I+1/q_I=1$.
Therefore, by~(\ref{Ibyparts}) and the continuous mapping theorem, the weak limit of
$\sqrt{n}\big\{\nabla_\theta L_n(\theta_0^*)-\mathbb{E}[\nabla_\theta\ell(\UU;\theta_0^*)]\big\}$
exists and is $B^{-1} \int \nabla_{\theta}\ell(\ww;\theta_0^*) \,d\big\{\alpha(\ww) + dC_{(-\alpha_1,...,-\alpha_d)}(\ww)\big\}  $, proving the result.
\end{proof}

\mds
In the case of asymptotic normality, the asymptotic variance of $\sqrt{n}(\hat\theta_n - \theta_0^*)$ is $B^{-1} \Sigma B^{-1}$, where
$$ \Sigma = \int  \nabla_{\theta}\ell(\ww;\theta_0^*)   \nabla_{\theta}\ell(\ww';\theta_0^*)^\top \,\Cc_0(d\ww,d\ww') ,$$
and $\Cc_0(\cdot,\cdot)$ is the covariance function associated to the limiting law of the empirical copula process, i.e.
$$ \Cc_0(\ww,\ww')= \Eb\Big[
\big\{\alpha(\ww)-\sum_{j=1}^d\dot{C}_j(\ww)\alpha_j(w_j)\big\}  \big\{\alpha(\ww')-\sum_{j=1}^d\dot{C}_j(\ww')\alpha_j(w'_j)\big\}  \Big],$$
denoting by $\alpha$ a usual $\Pb_0$-Brownian bridge on $[0,1]^d$. In particular, note that
$$
\Eb[\alpha(\ww)\alpha(\ww')]=C_0(\ww \wedge \ww') - C_0(\ww)C_0(\ww'),\; (\ww,\ww')\in [0,1]^{2d},
$$
denoting $\ww \wedge \ww'=\big(\min(w_1,w_1'),\ldots,\min(w_d,w_d')\big)$.
The previous matrices can be empirically estimated: see Remark 2 in~\cite{Chen2005}, or~\cite{tsukahara2005}.
Note that a more explicit formula of $\Sigma$ is given in the latter papers, say
\begin{align}
\Sigma=  \, \text{Var}\Big[ \nabla_{\theta}\ell(\UU;\theta_0^*) + \sum_{j=1}^d \int \nabla^2_{\theta,u_j}\ell(\uu;\theta_0^*)
\,  \1(U_j\leq u_j)  \Pb_0(d\uu)\Big].
\label{formula_Sigma}
\end{align}
Alternatively, the asymptotic variance of $\hat\theta_n$ can be estimated by bootstrap resampling (see below).

\begin{Remark}
\label{GausianOrNot}
If the map $\ww\mapsto \nabla_\theta\ell(\ww;\theta_0^*)$ is ``sufficiently
regular'', the limiting law of $\sqrt{n}\big(\hat{\theta}_n - \theta_0^*\big)$ may still be Gaussian under Condition~\ref{cond_normality_4_ter} even if
\ref{cond_normality_4_bis} is not fulfilled. Indeed, this law is deduced from the weak convergence of
integrals as $\int \sqrt{n}(\hat{\Pb}_n-\Pb_0)(\ww)\, \nabla_\theta\ell(d\ww;\theta_0^*)$ (Equation~(\ref{Ibyparts}) in the proofs).
It is well-known that integration is a way of regularizing potentially discontinuous processes.
In particular, $\sqrt{n}\big(\hat{\theta}_n - \theta_0^*\big)$ is asymptotically normal
if $\ww\mapsto \nabla_\theta\ell(\ww;\theta_0^*)$ has an integrable density with respect to the
Lebesgue measure on $[0,1]^d$: apply Theorem 4.5 in~\cite{Segers2014} and the remark that follows, using the fact that the limiting copula process
has bounded trajectories in every case.
\end{Remark}

\begin{Remark}
\label{WeightedCopProcess}
Theorem~\ref{Th_AN} relies on the weak convergence of the usual empirical copula process and an integration by parts trick. If the map
$\ww\mapsto \nabla_\theta\ell(\ww;\theta_0^*)$ is not of bounded variation, as required in Condition~\ref{cond_normality_5_bis},
an alternative method would be to invoke the weak convergence of the
weighted empirical process (\cite{BerghausBV}, Theorem 2.2) in Equation~(\ref{Ibyparts}). This is relevant when $g(\ww)|\nabla_\theta\ell(d\ww;\theta_0^*)|$ defines a finite measure, by setting
$$ g(\ww)= \min \Big\{ \bigwedge_{k=1}^d u_k, \bigwedge_{k=1}^d \big( 1 - \min_{j\neq k} u_j  \big)\Big\}^{\omega},\; \omega \in (0,1/2).$$
The price to be paid for this strategy would be to require the existence and a certain amount of regularity for the second order derivatives of $C_0$, say
$$\partial^2 C_{0}(\uu)/\partial u_j\partial u_k \leq K\Big[ \max \big\{ u_j(1-u_j),u_k(1-u_k) \big\} \Big]^{-1},\; K>0,$$
for every $\uu\in V_j\cap V_k$ and every indices $j$ and $k$ in $\{1,\ldots,d\}$. Both ways of reasoning seem to be complementary, but without any clear hierarchy between them.
\end{Remark}

In canonical maximum likelihood estimation of semi-parametric models, the asymptotic normality of the copula parameter is usually obtained by similar techniques but using slightly
different assumptions: see, e.g.,~\cite{Genest1995,Chen2005, tsukahara2005}.
In such a situation, the loss function $\ell$ is the copula log-likelihood and Condition \ref{cond_normality_5_bis} should then hold on the score function rather than on $\nabla_{\theta}\ell(\, \cdot \, ;\theta_0^*)$. Unfortunately, the bounded variation assumption is violated by many popular copula families with unbounded copula score functions such as the Gaussian copula.
Hence, it is not possible to establish the asymptotic normality of CML-estimators
for the latter copula family using the same set of assumptions as in Theorem~\ref{Th_AN}.
Our MMD estimator is most often less demanding.
Indeed, its loss function is typically obtained by integrating copula densities, inducing a ``regularization procedure''.
In other words, conditions of regularity as Condition~\ref{cond_normality_5_bis} should be satisfied more easily in the MMD case compared to the usual CML method (even if this statement is not a universal rule).

\mds
Nonetheless, in every case, we can still rely on another set of technical assumptions, as for the CML method.
Now, we provide the following result adopting this alternative formulation, whose assumptions naturally hold for the Gaussian copula and can be checked by a direct analysis. 


\begin{Condition}
For any $\ww\in (0,1)^d$, $ \big\| \nabla_{\theta}\ell(\ww;\theta_0^*) \big\| \leq C_1 \prod_{k=1}^d \{w_k(1-w_k)\}^{-a_k} $
for some constants $C_1$ and $a_k\geq0$ such that
$$
\mathbb{E}\Big[\prod_{k=1}^d \{U_k(1-U_k)\}^{-2a_k}\Big]<+\infty .
$$
Moreover, for any $\ww\in(0,1)^d$ and any $k=1,\dots,d$,
$$
\big\| \nabla^2_{\theta,w_k}\ell(\ww;\theta_0^*) \big\| \leq C_2 \, \{w_k(1-w_k)\}^{-b_k} \prod_{j=1,j\ne k}^d \{w_j(1-w_j)\}^{-a_j},
$$
for some constants $C_2$ and $b_k>a_k$ such that
$$
\mathbb{E}\Big[\{U_k(1-U_k)\}^{\zeta_k-b_k} \prod_{j=1,j\ne k}^d \{U_j(1-U_j)\}^{-a_j}\Big]<+\infty,
$$
for some $\zeta_k\in(0,1/2)$.
\label{cond_normality_4}
\end{Condition}
Under the latter conditions, the partial derivatives of $\ell(\ww,\theta)$ are allowed to blow up at the boundaries of $[0,1]^d$, but not ``too quickly''.
Such conditions are well-known in the copula literature: see Assumption A.3 in~\cite{Chen2005} or Assumption A.1 in~\cite{tsukahara2005}.
Therefore, we get the same result as in Theorem~\ref{Th_AN}.
\begin{Theorem}
\label{Th_AN_bis}
If Conditions \ref{cond_consistency1}-\ref{cond_normality_plus} and \ref{cond_normality_4} 
are fulfilled, then the MMD estimator $\hat{\theta}_n$ is asymptotically normal:
$\sqrt{n}(\hat\theta_n - \theta_0^*) \xrightarrow[n\rightarrow+\infty]{\mathcal{L}} \mathcal{N}(0,B^{-1}\Sigma B^{-1}).$
\end{Theorem}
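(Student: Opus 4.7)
The plan is to follow the classical M-estimator blueprint, while paying close attention to the fact that the empirical loss $L_n$ is built from the pseudo-observations $\hat\UU_i$ rather than the unobservable $\UU_i$. Since Theorem~\ref{thm_consistency} already yields $\hat\theta_n \to \theta_0^*$ almost surely and $\theta_0^*$ is interior by Condition~\ref{cond_normality_0}, the first-order condition $\nabla_\theta L_n(\hat\theta_n) = 0$ holds eventually, and a second-order Taylor expansion around $\theta_0^*$ delivers
$$0 = \sqrt{n}\,\nabla_\theta L_n(\theta_0^*) + \bar{B}_n \sqrt{n}(\hat\theta_n - \theta_0^*),$$
with $\bar B_n$ an integrated Hessian between $\hat\theta_n$ and $\theta_0^*$. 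Everything then reduces to (a) showing $\bar B_n \to B$ in probability, and (b) $\sqrt{n}\,\nabla_\theta L_n(\theta_0^*) \xrightarrow{\mathcal{L}} \mathcal{N}(0,\Sigma)$, after which Slutsky yields the conclusion.

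For (a), I would use consistency of $\hat\theta_n$ together with the integrable envelope and continuity at $\theta_0^*$ of $\theta \mapsto \mathbb{E}[\nabla^2_{\theta,\theta}\ell(\UU;\theta)]$ from Condition~\ref{cond_normality_2} to obtain a uniform law of large numbers on a small compact neighborhood (as in the proof of Theorem~\ref{thm_consistency}). A separate limited expansion in the $\uu$ argument, using the bound on $\nabla^2_{\theta,u_k}\ell$ from Condition~\ref{cond_normality_4} and the uniform closeness of $\hat\UU_i$ to $\UU_i$ coming from Glivenko-Cantelli for the marginals, shows that substituting the pseudo-observations for the true ones does not change the limit.

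The main obstacle is (b), because without Condition~\ref{cond_normality_5_bis} we cannot route through the weak convergence of the empirical copula process as in Theorem~\ref{Th_AN}. The strategy, which mirrors the semi-parametric score treatment of \cite{tsukahara2005} and \cite{Chen2005}, is to decompose
$$\sqrt{n}\,\nabla_\theta L_n(\theta_0^*) = \frac{1}{\sqrt n}\sum_{i=1}^n \nabla_\theta \ell(\UU_i;\theta_0^*) + \frac{1}{\sqrt n}\sum_{i=1}^n \bigl\{\nabla_\theta \ell(\hat\UU_i;\theta_0^*) - \nabla_\theta \ell(\UU_i;\theta_0^*)\bigr\}.$$
The first sum is a centered i.i.d.\ average by Condition~\ref{cond_normality_plus}; its second moment is finite thanks to the bound $\|\nabla_\theta\ell(\ww;\theta_0^*)\|\leq C_1\prod_k\{\ww_k(1-\ww_k)\}^{-a_k}$ combined with the moment assumption in Condition~\ref{cond_normality_4}, so the classical CLT applies. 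For the second sum, a first-order Taylor expansion in each coordinate of $\uu$ yields a leading term
$$\frac{1}{\sqrt n}\sum_{i=1}^n \sum_{k=1}^d \nabla^2_{\theta,u_k} \ell(\UU_i;\theta_0^*)\,(\hat U_{i,k} - U_{i,k})$$
plus a remainder. Using $\hat U_{i,k} - U_{i,k} = -(F_{n,k}(X_{i,k}) - F_k(X_{i,k})) + O(1/n)$ and a Hoeffding decomposition of the resulting V-statistic transforms the leading term into an asymptotically equivalent i.i.d.\ sum with centered summand $\sum_{k=1}^d \int \nabla^2_{\theta,u_k}\ell(\uu;\theta_0^*)\{\mathbf{1}(U_k\le u_k)-u_k\}\,\Pb_0(d\uu)$, exactly producing the variance $\Sigma$ of the theorem statement. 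The bounds and moment integrability in Condition~\ref{cond_normality_4}, in particular the strict gap $b_k > a_k$ together with the exponent $\zeta_k\in(0,1/2)$, are precisely what is needed to bound the blow-up of the integrands near the boundary of $[0,1]^d$, and the DKW inequality handles $\max_{i,k}|\hat U_{i,k}-U_{i,k}|$, yielding that the higher-order remainder is $o_p(1)$.

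Combining (a) and (b) via Slutsky's lemma gives $\sqrt{n}(\hat\theta_n-\theta_0^*)\xrightarrow{\mathcal{L}} \mathcal{N}(0, B^{-1}\Sigma B^{-1})$, with $\Sigma$ coinciding with the explicit formula displayed after Theorem~\ref{Th_AN}. The bulk of the technical work lies in the Hoeffding linearization and boundary control sketched above; everything else is standard.
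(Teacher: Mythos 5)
Your proposal is correct and follows essentially the route the paper itself prescribes: the paper gives no detailed proof of Theorem~\ref{Th_AN_bis} but defers to ``arguments from Proposition 2 in~\cite{Chen2005}'', which is precisely your linearization of $n^{-1/2}\sum_i\{\nabla_\theta\ell(\hat\UU_i;\theta_0^*)-\nabla_\theta\ell(\UU_i;\theta_0^*)\}$ followed by a Hoeffding projection, yielding the stated $\Sigma$. One small imprecision: the plain DKW bound on $\max_{i,k}|\hat U_{i,k}-U_{i,k}|$ is not enough to tame the boundary blow-up of $\nabla^2_{\theta,u_k}\ell$ when $b_k\ge 1/2$; the exponents $\zeta_k\in(0,1/2)$ in Condition~\ref{cond_normality_4} are there to be paired with \emph{weighted} empirical-process oscillation bounds of the form $|F_{n,k}-F_k|\lesssim n^{-1/2}\{F_k(1-F_k)\}^{1/2-\zeta_k}$ (as in~\cite{tsukahara2005} and~\cite{Chen2005}), which is the tool you should invoke in place of DKW for the remainder control.
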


The beginning of the proof involves a first-order decomposition as in the proof of Theorem~\ref{Th_AN}. Nonetheless, instead of invoking integration by parts, it relies on some results about multivariate rank statistics that have been obtained by Ruymgaart and his co-authors in the 70's: see Proposition 2 in~\cite{Chen2005}.

\begin{proof}

As in the proof of Theorem~\ref{Th_AN}, we have under Conditions \ref{cond_consistency1} to \ref{cond_normality_3}:
$$ 0 = \nabla_\theta L_n(\theta_0^*) + H_n (\hat\theta_n-\theta_0^*),\;\;
\text{and} \; \;
H_n \xrightarrow[n\rightarrow+\infty]{\Pb_0 - a.s.} B .$$
Moreover, according to Lemma 2 in~\cite{Chen2005} applied to $J=\nabla_\theta\ell(\cdot;\theta_0^*)$ and $w_j(v)=\left(v(1-v)\right)^{\zeta_j}$, Condition \ref{cond_normality_4} directly leads to:
$$
\sqrt{n}\Big(\nabla_\theta L_n(\theta_0^*)-\mathbb{E}\big[\nabla_\theta\ell(\WW;\theta_0^*)\big] \Big) \xrightarrow[n\rightarrow+\infty]{\mathcal{L}} \mathcal{N}(0,\Sigma) ,\; \text{with}
\label{behavior_nabla_L_n}
$$
where $\Sigma$ is given in~(\ref{formula_Sigma}).
Condition \ref{cond_normality_plus} yields
$\sqrt{n}\nabla_\theta L_n(\theta_0^*) \xrightarrow[n\rightarrow+\infty]{\mathcal{L}} \mathcal{N}(0,\Sigma)$.
Finally, as previously, we obtain
$$
\sqrt{n}(\hat\theta_n - \theta_0) = H_n^{-1} \sqrt{n}\nabla_\theta L_n(\theta_0^*) \xrightarrow[n\rightarrow+\infty]{\mathcal{L}} \mathcal{N}(0,B^{-1}\Sigma B^{-1}) .
$$
\end{proof}

\mds

The limiting laws obtained in Theorem~\ref{Th_AN} and~\ref{Th_AN_bis} are most often complex, even in the case of Gaussian limit laws.
Once pseudo-observations are managed, particularly through empirical copula processes, it is common practice to rely on bootstrap schemes.

\mds


Any bootstrap scheme can be invoked as long as it is valid to evaluate the limiting law of the empirical copula process: see~(\ref{Ibyparts}) in our proofs.
Under Condition 9 (resp. Conditions 10-11), its weak convergence in $\ell^{\infty}([0,1]^d)$ (resp. $L^q([0,1]^d)$ for some $q>1$) is sufficient.
In the former case, we can rely on Efron's nonparametric bootstrap (\cite{JD2004}), the multiplier bootstrap in~\cite{RemillardScaillet}, among others.
In the latter case, apply another version of the multiplier bootstrap as defined in~\cite{Segers2014} (see the remark at the top of p. 1611).
And, in the case of a correctly specified copula model, the parametric bootstrap (\cite{GenestRemillard2008}) could surely be invoked too.

\mds

To be specific, the calculation of our nonparametric bootstrap estimator requires resampling every observation in the initial sample with replacement, yielding a bootstrap sample
$\Sc_n^*=\big(\XX_1^*,\ldots,\XX_n^*\big)$.
The associated empirical measure is
$$\Pb_n^*=n^{-1}\sum_{i=1}^n \delta_{\XX^*_i}=n^{-1}\sum_{i=1}^n W_{i,n} \delta_{\XX_i},    $$
where the vector of weights $(W_{1,1},\ldots,W_{n,n})$ is drawn following a $n$ multinomial law with success probabilities $(1/n,\ldots,1/n)$. We deduce
the bootstrapped empirical process as $\sqrt{n}\big(\hat\Pb_n^* - \hat\Pb_n \big)$, where $\hat\Pb_n^*$ denotes the empirical measure of the pseudo-sample obtained from $\Sc_n^*$. Exactly as for $\hat\theta_n$, one gets a bootstrapped estimator $\hat\theta^*_n$, but working on $\Sc_n^*$ instead of the initial sample.
The asymptotic laws of $\sqrt{n}(\hat\theta^*_n - \hat\theta_n)$ and $\sqrt{n}(\hat\theta_n - \theta_0)$ will
be the same because the limiting laws of $\sqrt{n}(\hat\Pb_n^*- \hat\Pb_n)$ and $\sqrt{n}\big( \hat\Pb_n-\Pb_0\big)$ are similar in~(\ref{Ibyparts}).

\mds
For the multiplier bootstrap (\cite{Segers2014}, Section 4.2), consider i.i.d. weights $(\xi_i)_{1\leq i \leq n}$, with both mean and variance equal to one. These weights
satisfy $\int \sqrt{\Pb(\xi_i>t)}\, dt <\infty$ and are independent of the sample.
Introduce the cdf $G_n^*(\xx)=n^{-1}\sum_{i=1}^n \xi_i \1(\XX_i\leq \xx)$ on $\Rb^d$ and its margins $G_{n,k}^*$, $k\in\{1,\ldots,d\}$. Build the pseudo-sample
$(\VV_i^*)_{i=1,\ldots,n}$ where $V^*_{i,k}=G_{n,k}^*(X_{i,k})$ for any $k$, the associated empirical copula $\tilde C_n^*$ and the empirical copula process $\sqrt{n}\big(
\tilde C_n^* - C_n\big)$.
The bootstrapped estimator of $\theta_0$ is then obtained by MMD minimization, but replacing the initial pseudo-sample $(\hat\UU_i)_{i=1,\ldots,n}$ by
$(\VV^*_i)_{i=1,\ldots,n}$, and the same arguments as above apply.

\mds

Recently, subsampling has been proposed as an interesting alternative to bootstrap estimates of functionals of many empirical copula processes, possibly smoothed or weighted (\cite{Kojadinovic2019}). This technique is valid when our Condition~\ref{cond_normality_4_bis} is satisfied and when the usual empirical process of $(\UU_i)_{i=1,\ldots,n}$ is weakly convergent in $\ell^{\infty}([0,1]^d)$ to a tight centered Gaussian process. In particular, the latter result applies when our $\XX$-sample is a stretch from a strongly mixing stationary sequence.

\subsection{Examples}

Now, let us check that the previous asymptotic results can be applied for two usual bivariate copula families, here the Gaussian and the Marshall-Olkin copulas.
In this subsection, when we assume that the model is well-specified, i.e. that the law of the observations belongs to the considered parametric family,
the pseudo-true parameter $\theta_0^*$ is simply the true underlying parameter and is denoted by $\theta_0$.

\mds
In both cases, we will use some characteristic Gaussian-type kernel $K_U$ defined as
\begin{equation}
 K_h(\uu,\vv)=\exp\Big\{ -\frac{( h(u_1) - h(v_1))^2+( h(u_2) - h(v_2))^2  }{\gamma^2}  \Big\} ,
\label{def_K_U}
 \end{equation}
for some injective map $h:[0,1]\mapsto \Rb$ and some tuning parameter $\gamma>0$ (see, e.g.,~\cite{Christmann}, Th. 2.2).
Indeed, the latter function $K_h$ is a kernel: let $\zeta:\Rb^2 \rightarrow \Fc$ be the feature map that is associated with the usual Gaussian kernel $K_G$, i.e.
$ K_G(\xx,\yy)= \langle \, \zeta(\xx) , \zeta(\yy) \, \rangle_{\Fc}$, where the Gaussian kernel is defined for $\xx,\yy\in\mathbb{R}^2$ by
$$
K_G(\xx,\yy)=\exp\Big\{ -\frac{( x_1 - y_1)^2+( x_2 - y_2)^2  }{\gamma^2}  \Big\} .
$$
Then, the feature map that defines $K_h$ is simply $\psi : [0,1]^2\rightarrow \Fc$ given by
$ \psi(\uu)=\zeta\big( h(u_1),h(u_2)\big)$ for every $\uu\in (0,1)^2$, and $K_h$ inherits from $K_G$ its ``characteristic'' property.

\mds

Hereafter, we shall denote by $\Phi$ and $\phi$ the cumulative distribution function and the probability density function of the standard normal distribution, respectively.
Then, a natural choice is to set $h(u)=\Phi^{-1}(u)$.
The latter kernel will simply be denoted by $K_U$.
Even if it is possible to always choose the usual Gaussian kernel $K_G$ by setting $h(u)=u$, we have observed that $K_U$ provides better numerical results in some situations. We refer the reader to the simulation study for a detailed comparison.
Moreover, it is sometimes simpler to use $K_U$ rather than $K_G$.
For example, in the case of Gaussian copulas, the criterion $L_0$ can be analytically calculated when $K=K_U$ (see Appendix F
), contrary to $K=K_G$.
Note that it is not so surprising that $K_U$ provides better empirical results than $K_G$.
Indeed, it is a common procedure in copula modeling to push back the sample observations on $\mathbb{R}^d$ using Gaussian quantile functions componentwise.
This trick spreads the data cloud and often improves inference.
At the opposite, our conditions of regularity for Marshall-Olkin copulas can be checked only when the kernel is $K_G$.

\subsubsection{Gaussian copulas}

Let us consider two-dimensional Gaussian copulas $C_\theta(\uu)=\Phi_{2,\theta}\big( \Phi^{-1}(u_1), \Phi^{-1}(u_2) \big)$, indexed by $\theta\in
(-1,1)$.
Here, $\Phi_{2,\theta}$ denotes the cdf of a bivariate Gaussian centered vector $(X_1,X_2)$, $\Eb[X_k^2]=1$, $k=1,2$, and $\Eb[X_1 X_2]=\theta$.
The associated copula density has been given in Equation~(\ref{def_cop_gauss}).


\begin{Proposition}
\label{AN_Gaussian_cop}
Assume that the true underlying copula is $C_{\theta_0}$ for some parameter $\theta_0\in (-1,1)$.
Then, when $K\in \{K_U,K_G\}$ and $\gamma^2<2$, the estimator $\hat\theta_n$ given by~(\ref{RKHS_criterion}) is strongly consistent and $\sqrt{n}(\hat\theta_n -\theta_0)$
is asymptotically normal.
\end{Proposition}
The proof is deferred to Appendix C. 
For the sake of illustration, we will verify the conditions of validity of Theorem~\ref{Th_AN}, even if those of Theorem~\ref{Th_AN_bis} can be checked too.
In this proof, it is stated that the term $B$ that appears in the asymptotic variance of $\hat \theta_0$ when $K=K_U$ has the closed-form expression
$$ B_G(\theta_0) = \frac{3 \gamma^2\big\{ (2+\gamma^2/2)^2 + 8\theta_0^2 \big\}}{2\{(2+\gamma^2/2)^2 - 4 \theta_0^2\}^{5/2}}>0\cdot$$

\mds

Now, let us deal with the general case of misspecification.
\begin{Corollary}
\label{AN_Gaussian_cop_cor}
Assume that the true underlying copula is $C_0$ and $K\in \{K_U,K_G\}$ with $\gamma^2<2$.
If the estimator $\hat\theta_n$ given by~(\ref{RKHS_criterion}) is strongly consistent to $\theta_0^*\in (-1,1)$ that satisfies the first-order Condition~\ref{cond_normality_plus} and if $B>0$, then
$\sqrt{n}(\hat\theta_n -\theta^*_0)$ is asymptotically normal.
\end{Corollary} The proof is given in Appendix D. 
When a Gaussian copula is contaminated by a fixed bivariate copula $\bar C$, then $C_0=(1-\eps)C_{\theta_0}+\eps\bar C$, and
the real number $B$ is now
$$B=\int \nabla_{\theta,\theta}^2  \ell(\uu;\theta_0^*) C_0(d\uu) = (1-\eps)B_G(\theta^*_0) + \eps \int \nabla_{\theta,\theta}^2 \ell(\uu;\theta_0^*) \bar C(d\uu).$$
Here, we have assumed the consistency of $\hat\theta_n$ because we cannot exclude the existence of several minimizers of $L_0$ in general, even if it is a very unlikely situation.

\subsubsection{Marshall-Olkin copulas}

By definition (\cite{Nelsen2007}, Section 3.1.1), the bivariate Marshall-Olkin copula is defined on $[0,1]^2$ as
\begin{equation}
 C_\theta(u,v)= u^{1-\alpha} v \1(u^\alpha \geq v^\beta) + u v^{1-\beta} \1(u^\alpha < v^\beta),
 \label{def_MO}
 \end{equation}
for some parameter $\theta=(\alpha,\beta)$, $0<\alpha,\beta <1$.
This copula has no density with respect to the Lebesgue measure on the whole $[0,1]^2$.
The absolutely continuous part of $C_\theta$ (with respect to the Lebesgue measure) is defined on $[0,1]^2\setminus \Cf$, where $\Cf=\{(u,v)\in [0,1]^2 \setminus u^\alpha = v^\beta\}$.
The singular component is concentrated on the curve $\Cf$, and $\Pb( U^\alpha= V^\beta)=\alpha\beta/(\alpha +\beta-\alpha\beta)=:\kappa$, when $(U,V)\sim C_\theta$.
With the same notation as in~\cite{Nelsen2007}, $C_\theta(u,v)=A_\theta(u,v)+S_\theta(u,v)$, where, for every $(u,v)\in [0,1]^2$,
$S_\theta(u,v)=
\kappa\big\{ \min(u^\alpha,v^\beta) \big\}^{1/\kappa}$ and
\begin{align*} A_\theta(u,v)
& =\int_0^u\int_0^v \frac{\partial^2 C_\theta}{\partial u\partial v}(s,t)\,ds\,dt
\\
& = \int_0^u\int_0^v \big\{ (1-\alpha)s^{-\alpha}\1(s^\alpha > t^\beta)+ (1-\beta)t^{-\beta}\1(s^\alpha < t^\beta) \big\} \,ds\,dt.
\end{align*}

\mds
Let us calculate $\Eb[\psi\big( U,V\big)]$, $(U,V)\sim C_\theta$, for any measurable map $\psi$, to be able to calculate $\ell(\ww,\theta)$ for our bivariate Marshall-Olkin model.
Given a small positive real number $\delta$, let us first evaluate the mass along $\Cf$, when the abscissa and the ordinate belong to $[u,u+\delta]$ and $[v,v+\delta]$ respectively: if $u^\alpha=v^\beta$ and $\delta \ll 1$,
\begin{eqnarray*}
\lefteqn{S_\theta(u+\delta,v+\delta)-S_\theta(u+\delta,v)-S_\theta(u,v+\delta)+S_\theta(u,v) }\\
&=&\kappa \min\big\{ (u+\delta)^\alpha,(v+\delta)^\beta \big\}^{1/\kappa} - \kappa u^{\alpha/\kappa}     \\
&\simeq & \delta \alpha u^{\alpha/\kappa - 1} \1(\alpha v \leq \beta u)+  \delta \beta v^{\beta/\kappa - 1} \1(\alpha v > \beta u) \\
&\simeq & \delta \alpha u^{\alpha/\beta - \alpha} \1(\alpha v \leq \beta u)+  \delta \beta u^{1-\alpha} \1(\alpha v > \beta u),
\end{eqnarray*}
providing the density along the curve $\Cf$.
Therefore, we obtain
\begin{equation}
\Eb[\psi\big( U,V\big)] = \int \psi(s,t) \frac{\partial^2 C_\theta}{\partial u\partial v}(s,t)\,ds\,dt + \int \psi(u,v)\, S_\theta(du,dv)=:I_1+I_2,
\label{psi_UV}
\end{equation}
\begin{equation}
 I_1 =  \int \psi(s,t) \big\{ (1-\alpha)s^{-\alpha}\1(s^\alpha > t^\beta)+ (1-\beta)t^{-\beta}\1(s^\alpha < t^\beta) \big\} \,ds\,dt.
 \label{def_I1}
 \end{equation}

Let $(\bar u_{\alpha,\beta},\bar v_{\alpha,\beta})$
be a point of $\Cf$ such that $\alpha\bar v_{\alpha,\beta}=\beta \bar u_{\alpha,\beta}$.
It is easy to check that such a point exists in $[0,1]^2$ and is unique, except when $\alpha=\beta$.
In the latter case, the couple $(\bar u_{\alpha,\beta},\bar v_{\alpha,\beta})$ may be arbitrarily chosen along the main diagonal of $[0,1]^2$.
Then, we get
\begin{equation}
 I_2=\int \psi(u,v)\, S_\theta(du,dv)=
\int_0^{\bar u_{\alpha,\beta}} \psi(u,u^{\alpha/\beta})\,\beta u^{1-\alpha}\, du +
\int_{\bar u_{\alpha,\beta}}^{1} \psi(u,u^{\alpha/\beta})\,\alpha u^{\alpha/\beta-\alpha}\, du ,
 \label{def_I2}
  \end{equation}
with $\bar u_{\alpha,\beta}=\big( \beta/\alpha \big)^{\beta/(\alpha-\beta)}$ when $\alpha\neq \beta$ and $\bar u_{\alpha,\alpha}=e^{-1}$.
The latter value has been chosen so that the map
$(\alpha ,\beta) \mapsto \bar u_{\alpha,\beta}$ is continuous on the whole set $(0,1)^2$, i.e. even at the main diagonal.
For most regular functions $\psi$, the latter integrals $I_1$, $I_2$ and then $\Eb[\psi\big( U,V\big)]$ are continuous functions of $(\alpha,\beta)$.

\begin{Proposition}
\label{AN_MO}
For almost any true parameter $\theta_0=(\alpha_0,\beta_0)$ that belongs to the interior of $\Theta=[\epsilon,1-\epsilon]^2$ for some $\epsilon\in(0,1/2)$, the estimator $\hat\theta_n$
given by~(\ref{RKHS_criterion}) is strongly consistent, using the kernel $K_U$ or $K_G$.
Moreover, when $K=K_G$ and $B$ is positive definite, $\sqrt{n}(\hat\theta_n -\theta_0)$ is weakly convergent.
\end{Proposition}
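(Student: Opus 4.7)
The plan is to verify the hypotheses of Theorem~\ref{thm_consistency} for the strong consistency claim, and those of the second part of Theorem~\ref{Th_AN} for the weak convergence of $\sqrt{n}(\hat\theta_n-\theta_0)$. The singular component of $C_\theta$ prevents Condition~\ref{cond_normality_4_bis} from holding, so we are forced to rely on the weaker Condition~\ref{cond_normality_4_ter}, which fortunately still entails Theorem~\ref{Th_AN}.

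For consistency, Condition~\ref{cond_consistency1} requires compactness of $\Theta$ (given), continuity of $L_0$, and uniqueness of its minimizer. Continuity of $L_0(\theta)=\int K_U\,d(\Pb_\theta\otimes\Pb_\theta)-2\int K_U\,d(\Pb_\theta\otimes\Pb_0)$ will follow by plugging $\psi=K_U(\cdot,v)$ and $\psi=K_U(\cdot,u)$ into~(\ref{psi_UV})--(\ref{def_I2}): the integrands are smooth in $(\alpha,\beta)$ and the endpoint $\bar u_{\alpha,\beta}$ is continuous on $(0,1)^2$ by construction, so a dominated-convergence argument gives joint continuity. Uniqueness follows because $K_U$ is characteristic, making $\Db$ a true distance, and the Marshall-Olkin family is identifiable: the singular mass $\kappa=\alpha\beta/(\alpha+\beta-\alpha\beta)$ together with the shape of the curve $u^\alpha=v^\beta$ jointly pin down $(\alpha,\beta)$. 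Condition~\ref{cond_consistency2} is satisfied with the constant envelope $F\equiv 3$, since $0\le K_U\le 1$ forces $|\ell(\cdot;\theta)|\le 3$, and pointwise continuity of $\theta\mapsto\ell(w;\theta)$ is checked as for $L_0$.

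For asymptotic normality, Condition~\ref{cond_normality_0} is by assumption. Conditions~\ref{cond_normality_1}--\ref{cond_normality_2} reduce to differentiating~(\ref{psi_UV})--(\ref{def_I2}) twice in $(\alpha,\beta)$; on the open set $\{\alpha\neq\beta\}$, which contains almost every $\theta_0\in\Theta$, the integrands and the endpoint $\bar u_{\alpha,\beta}$ are $C^\infty$ in $(\alpha,\beta)$, the resulting expressions are uniformly bounded in $w$, and dominated convergence yields the continuity of $\theta\mapsto\Eb[\nabla^2_{\theta,\theta}\ell(\UU;\theta)]$. Condition~\ref{cond_normality_plus} is the first-order optimality at an interior minimum. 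For Condition~\ref{cond_normality_5_bis}, note that $\ell(w;\theta)$ depends on $w$ only through the term $-2\int K_U(u,w)\,\Pb_\theta(du)$, which is smooth in $w$; hence $\nabla_\theta\ell(\cdot;\theta_0)$ is $C^1$ on $(0,1)^2$ with bounded partials, therefore right-continuous and of bounded variation on $[0,1]^2$. Condition~\ref{cond_normality_4_ter} holds because the partial derivatives of $C_{\theta_0}$ exist and are continuous outside the Lebesgue-null curve $\Cf$, and Condition~\ref{cond_moment_dell} follows from the same boundedness.

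The main obstacle, and the source of the \emph{almost any} qualifier, is Condition~\ref{cond_normality_3}: the positive definiteness of $B=\Eb[\nabla^2_{\theta,\theta}\ell(\UU;\theta_0)]$. Since $\Db$ is a genuine distance and $L_0(\theta)-L_0(\theta_0)=\Db^2(\Pb_\theta,\Pb_{\theta_0})$, a formal second-order Taylor expansion identifies $B$ with the Hessian at $\theta_0$ of the squared MMD to the true law, so positive definiteness is equivalent to injectivity of the tangent map $\theta\mapsto\mu_{\Pb_\theta}$ into the RKHS at $\theta_0$. This is an analytic constraint on $(\alpha_0,\beta_0)$ whose failure set is Lebesgue-negligible; it should in particular be avoided near the diagonal $\{\alpha=\beta\}$, where the continuous but non-smooth choice $\bar u_{\alpha,\alpha}=e^{-1}$ may also compromise differentiability of $\ell$ in $\theta$. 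Excluding this exceptional set, every hypothesis of the second part of Theorem~\ref{Th_AN} holds and the weak convergence of $\sqrt{n}(\hat\theta_n-\theta_0)$ follows.
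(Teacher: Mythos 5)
Your overall route is the paper's: check Conditions~\ref{cond_consistency1}--\ref{cond_normality_5_bis} together with \ref{cond_normality_4_ter}--\ref{cond_moment_dell} and invoke the second part of Theorem~\ref{Th_AN}, the singular component of the Marshall--Olkin copula forcing the use of Condition~\ref{cond_normality_4_ter} in place of \ref{cond_normality_4_bis}, and the ``almost any'' qualifier coming from the possible degeneracy of $B$ on the zero set of an analytic function. Two remarks before the main issue: the paper does not excise the diagonal $\{\alpha=\beta\}$ (it shows $(\alpha,\beta)\mapsto\bar u_{\alpha,\beta}$ extends to a $C^2$ map across it, with explicit derivative values), and the verification of Condition~\ref{cond_normality_1} is not the one-line smoothness argument you suggest: the indicators $\1(s^\alpha>t^\beta)$ and the moving endpoint $\bar u_{\alpha,\beta}$ make the differentiation under the integral sign the bulk of the paper's proof, carried out by exhibiting explicit integrable envelopes for $\partial_\alpha\tilde I_1$ and $\partial_\alpha\tilde I_2$.

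The genuine gap is your verification of Condition~\ref{cond_normality_5_bis}. You argue that $\nabla_\theta\ell(\cdot;\theta_0)$ is $C^1$ on $(0,1)^2$ with bounded partial derivatives and conclude that it is of bounded variation. That implication fails in dimension two: the notion needed for the integration-by-parts step~(\ref{Ibyparts}) is Hardy--Krause (Vitali) variation, and the rectangular increments of a $C^1$ function are controlled by $\int_0^1 V\big(\partial_{w_1}f(w_1,\cdot)\big)\,dw_1$, i.e.\ by the \emph{variation} of the partial derivative in the other coordinate, not by its supremum; a $C^1$ function with bounded gradient can have infinite Vitali variation (lacunary series such as $\sum_k 4^{-k}\sin(2^k w_1)\sin(2^k w_2)$ give counterexamples). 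Moreover your premise is itself doubtful: with $h=\Phi^{-1}$, each $w_k$-derivative of $\int K_U(\uu,\ww)\,\Pb_{\theta_0}(d\uu)$ carries a factor $1/\phi(\Phi^{-1}(w_k))$, and a direct computation shows the resulting partial derivative grows like $\exp\big(\Phi^{-1}(w_k)^2\gamma^2/(2(\gamma^2+2))\big)$ as $w_k\to 0,1$, so it is not bounded up to the boundary. The correct (and the paper's) criterion is the integrability on $[0,1]^2$ of the mixed partial $\ww\mapsto\nabla^3_{\theta,w_1,w_2}\ell(\ww;\theta_0)$ together with control of the restrictions to the edges; differentiation in $\ww$ only multiplies the integrands by polynomial-times-Gaussian factors, which preserves the integrable envelopes already built for $\nabla_\theta\ell$. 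The conclusion of your step is true, but the argument given for it would not survive scrutiny and must be replaced by that mixed-partial computation.
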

See the proof in Appendix E. 
When the latter limiting law of $\sqrt{n}(\hat\theta_n - \theta_0)$ exists, it is not Gaussian in general. It could be numerically evaluated by usual resampling techniques,
as the consistent bootstrap scheme in~\cite[Section 4.2]{Segers2014}.

\medskip

\begin{Remark}
The difficulty to state the limiting law of $\sqrt{n}(\hat\theta_n - \theta_0)$ with $K=K_U$ arises from the second-order derivatives of
$\ell(\ww,\theta)$ with respect to $\theta$. To be short, at some stage, one has to deal with integrals as
$$ \int \exp\big\{-\frac{( x - y)^2}{\gamma^2}-\frac{( x_{\alpha/\beta} - y_{\alpha/\beta} )^2}{\gamma^2}\big\} \frac{x^a y^{\bar a}\Phi(x)^b \Phi(y)^{\bar b}}{\phi^2(x_{\alpha/\beta})} \ln^c \Phi(x)\ln^{\bar c} \Phi(y) \phi(x)\phi(y)\, dx\, dy $$
by setting $t_\nu=\Phi^{-1}\big( \Phi(t)^\nu  \big)$ for any $\nu \geq 0$ and any real number $t$.
Here, $(a,b,c,\bar a,\bar b,\bar c)$ denotes a vector of nonnegative real numbers.
It can be proved that the latter integral is not convergent, even in the simplest case $\alpha=\beta$ and all the other constants are zero.
\end{Remark}

In the case of general misspecification, a similar result is still valid.
\begin{Corollary}
\label{AN_MO_cop_cor}
Assume that the true underlying copula $C_0$ is arbitrary.
If the estimator $\hat\theta_n$ given by~(\ref{RKHS_criterion}) is strongly consistent to $\theta_0^*\in (-1,1)$ that satisfies the first-order Condition~\ref{cond_normality_plus}, then the same results as in Proposition~\ref{AN_MO} apply, replacing $\theta_0$ by $\theta_0^*$.
\end{Corollary}
The arguments of the proof are exactly the same as in Appendix E. 

\section{Implementation and experimental study}
\label{section:experiments}

In this section, we compare the MMD estimator to the CML and the moment estimator on simulated data. The CML and the method of moments by inversion of Kendall's tau are implemented in the R package VineCopula~\cite{VineCopulaR}. We implemented the MMD estimator using the stochastic gradient algorithm described in~\cite{Cherief2019}. This procedure requires sampling from the copula model we want to estimate. For this, we used again VineCopula. Note that our implementation of the MMD estimator is itself available as the R package MMDCopula~\cite{MMDCopula}.

\subsection{Implementation via stochastic gradient and the MMDCopula package}

We start by a short description of the algorithm implemented in our R package~\cite{MMDCopula} to compute the MMD estimator in the bivariate case. It is of course possible to use the vine-copula procedure to decompose higher-dimensional copulas into bivariate ones. The main idea is differentiating the criterion~\eqref{RKHS_criterion:density}. Under suitable regularity assumptions on the copula density $c_\theta$ with respect to the Lebesgue measure on $\Uc$, we have
\begin{align*}
\frac{d}{d\theta} & \Big[
    \int K_U(\uu,\vv) c_\theta (\uu) c_\theta (\vv) \, d\uu \, d\vv
- \frac{2}{n}\sum_{i=1}^n \int K_U(\uu,\hat \UU_i) c_\theta (\uu) \, d\uu \Big]
\\
& = 2 \int K_U(\uu,\vv) \frac{d \ln c_\theta (\uu)}{d\theta} c_\theta (\uu) c_\theta (\vv) \, d\uu \, d\vv
- \frac{2}{n}\sum_{i=1}^n \int K_U(\uu,\hat \UU_i)  \frac{d \ln c_\theta (\uu)}{d\theta} c_\theta (\uu) \, d\uu
\\
& = 2 \, \mathbb{E}\Big[
\frac{d \ln c_\theta (\UU)}{d\theta} \big\{ K_U(\UU,\VV)   - \frac{1}{n}\sum_{i=1}^n K_U(\UU,\hat{\UU}_i) \big\}
\Big],
\end{align*}
where the expectation is taken with respect to $\UU$ and $\VV$, that are independently drawn from $C_\theta$ (a formal statement can be found in~\cite{Cherief2019}). Even though this expectation is usually not available in closed form, it is possible to estimate it by Monte Carlo to use a stochastic gradient descent. That is, we fix a starting point, a step size sequence $(\eta_t)_{t\geq 0}$, and iterate:
\begin{equation*}
\left\{
 \begin{array}{l}
  \text{draw } \UU_1^\star,\dots,\UU_n^\star,\VV_1^\star,\dots,\VV_n^\star \sim C_{\theta_t} \text{ i.i.d},
  \\
  \theta_{t+1} \leftarrow  \theta_t - 2\eta_t n^{-2} \sum_{i,j=1}^n
\frac{d \ln c_\theta (\UU_j^\star)}{d\theta}_{|\theta=\theta_t} \big\{ K(\UU_j^\star,\VV_i^\star) - K_U(\UU_j^\star,\hat{\UU}_i) \big\} .
 \end{array}
 \right.
\end{equation*}
In practice, we take $\eta_t=1/\sqrt{t}$ as recommended in~\cite{Cherief2019}. We perform 200 iterations, and return the average of $\theta_t$ over the last $100$ iterations.

\medskip

The implementation of this algorithm requires (i) to be able to sample from $C_\theta$ and (ii) to compute $c_\theta$ and its partial derivative with respect to $\theta$. A list of copula densities and their differentials can be found in~\cite{Sch14} and is implemented in VineCopula~\cite{VineCopulaR}. Some procedures to sample from $C_\theta$ can also be found in VineCopula. The same ideas can be adapted even if the latter copula density does not exist on the whole hypercube, as for the Marshall-Olkin copula. In the latter case with $\alpha=\beta$, we implemented our own sampler and considered the copula density with respect to the measure given by the sum of the Lebesgue measure on $[0,1]^2$ plus the Lebesgue measure on the first diagonal.

In theory, the criterion in~\eqref{RKHS_criterion} has no reason to be convex in $\theta$. Therefore, it is possible that the algorithm gets stuck in a local minimum. In order to avoid this situation, we propose two possible strategies: 1) starting from a random initialization and 2) starting from the empirical Kendall's tau and the associated $\theta$ values. We compared these two strategies in a set of experiments in the supplementary material. In the non contaminated case, the Kendall's tau initialization is slightly better (especially for small $\gamma$'s) but both strategies are comparable. However in a contaminated case, the random initialization becomes better. We suspect Kendall's tau might be close to a local minimizer of the MMD in the latter case. In our package and in our simulations, the random initialization is the default mode. The convergence of stochastic gradient algorithms for MMD mininimization in a general framework is discussed in~\cite{Cherief2019}.

\medskip

Also, note that it is possible to use a quasi Monte Carlo rather than a Monte Carlo sampling scheme. In our package MMDCopula~\cite{MMDCopula}, we give the user the possibility to choose the sampling scheme for the $\UU_j$'s and the $\VV_i$'s separately. In all our simulations, we observed that the use of Monte Carlo on the $\UU_j$
and of quasi Monte Carlo on the $\VV_i$'s led to the best results, so this setting is chosen by default in our package, and it was also used in the following experiments. An important point is that the gradient method is {\it not} invariant by reparametrization. In order to deal with gradient descents in compact sets only, we decided to parametrize all the copulas by their Kendall's tau (apart from the Marshall-Olkin copula, implemented in the case $\alpha=\beta$, that is parametrized by $\alpha$ and does not use quasi Monte Carlo).

\medskip

Finally, in the MMDCopula package, the estimator $\hat\theta_n$ can be computed for five different kernels. In the following simulations, we worked with
the Gaussian kernel $k_U(\UU,\VV) = \exp(- \|h(\UU) - h(\VV) \|_2^2 / \gamma^2 ) $, the exp-$L_2$ kernel $k_U(\UU,\VV) = \exp(- \|h(\UU) - h(\VV) \|_2 / \gamma ) $ and the exp-$L_1$ kernel $k_U(\UU,\VV) = \exp(- \|h(\UU) - h(\VV) \|_1 / \gamma ) $, where $h$ is either the identity or $\Phi^{-1}$ and is applied coordinatewise. A major question is then: how to calibrate $\gamma$, and which kernel to choose? We performed some experiments on synthetic data to answer this question. In Figure~\ref{FigureGamma}, we provide the MSE of the estimators based on these three kernels as a function of $\gamma$. A more complete study of the dependence of the MSE with respect to $\gamma$ in various models is provided in Appendix I.

\medskip

\begin{figure}[h]
 \begin{center}
 \includegraphics[height=7.5cm]{./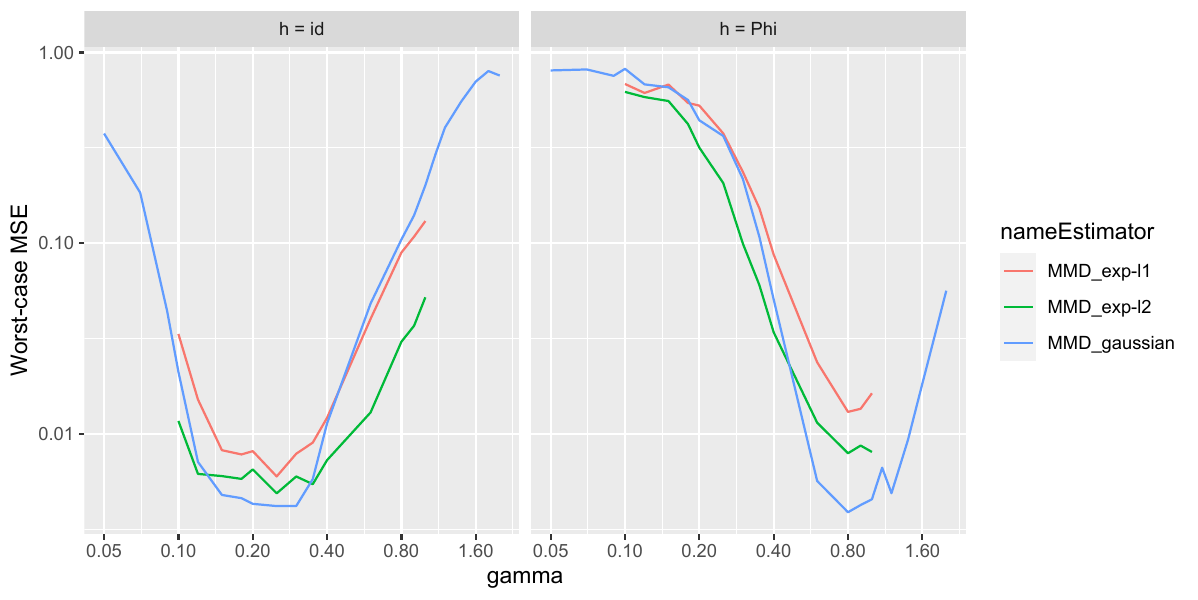}
 \caption{MSE of $\hat\theta_n$ based on the Gaussian kernel $k_U(\UU,\VV) = \exp(- \|h(\UU) - h(\VV) \|_2^2 / \gamma^2 ) $, the exp-$L_2$ kernel $k_U(\UU,\VV) = \exp(- \|h(\UU) - h(\VV) \|_2 / \gamma ) $ and the exp-$L_1$ kernel $k_U(\UU,\VV) = \exp(- \|h(\UU) - h(\VV) \|_1 / \gamma ) $, as functions of $\gamma$.}
 \label{FigureGamma}
 \end{center}
\end{figure}

In these experiments, $n=1000$ observations were sampled from the Gaussian copula, and the objective was to estimate the parameter of this copula. Each experiment was repeated $200$ times. Except in some experiments in the supplement used to calibrate $\gamma$, the true Kendall's tau was fixed as $\tau=0.5$.

\medskip

The take-home message is that, as far as the Gaussian copula is concerned and $n=1000$, the Gaussian kernel is the best one, whatever the choice of $h$. When $h$ is the identity map, the optimal $\gamma$ is $\gamma \simeq 0.25$. For $h(\uu)=\big(\Phi^{-1}(u_1),\Phi^{-1}(u_2)\big)$, the optimal value is $\gamma=0.80$. We performed similar experiments for other copula families. The results can be found in Appendix I. 
The optimal values for each family are set as default values in our package, and used in the following experiments.

Finally, note that we also discuss the computational cost in Appendix G (for n=1000, a MMD estimation takes around 4-7 seconds for most copula families).

\subsection{Comparison to CML on synthetic data}

We now compare the MMD estimators based on the Gaussian kernel (with two choices of $h$) to the canonical maximum likelihood (CML) estimator and the estimator based on the inversion of Kendall's tau (``Itau''). We would like to illustrate convergence when the sample size $n\rightarrow \infty$ and robustness to the presence of various type of outliers. We designed nine types of outliers.
\begin{itemize}
 \item {\it Uniform}: the outliers are drawn i.i.d from the uniform distribution $\mathcal{U}([0,1]^2)$.
 \item {\it Top-left}: the outliers belong to the top-left corner of $[0,1]^2$, that is, they are drawn i.i.d from $\mathcal{U}([0,q] \times [1-q,q])$ where $q = 0.001$.
 \item {\it Bottom-left}: the outliers belong to the bottom-left corner, that is, they are drawn i.i.d from $\mathcal{U}([0,q]^2)$.
 \item {\it Diagonal}: the outliers are uniform on the first diagonal.
 \item {\it Gauss 0.2}: the outliers are drawn from the Gaussian copula with a Kendall's tau equal to $0.2$.
 \item {\it Gauss -0.8}: the outliers are drawn from the Gaussian copula with a Kendall's tau equal to $-0.8$.
 \item {\it Frank -0.8}: the outliers are drawn from the Frank copula with a Kendall's tau equal to $-0.8$.
 \item {\it Clayton 0.5}: the outliers are drawn from the Clayton copula with a Kendall's tau equal to $0.5$.
 \item {\it Student 0.5 3df}: the outliers are drawn from the Student copula with a Kendall's tau equal to $0.5$ and $3$ degrees of freedom.
\end{itemize}
In each case, the data are sampled on $[0,1]^2$ from the desired copula.
Finally, the contaminated observations are rescaled by their rank in order to keep pseudo-uniform margins.

\mds
In a first series of experiments, we use the various estimators to estimate the parameter of the Gaussian copula. We compare their robustness to the presence of a proportion $\varepsilon$ of each type of outliers, when $\varepsilon$ ranges from $0$ to $0.05$. In a second time, we go beyond the Gaussian model: we replicate these experiments for the Frank copula, the Clayton copula, the Gumbel copula and the Marshall-Olkin copula. The results being quite similar, we save space by reporting only them for {\it top-left} outliers. In the last series of experiments, we come back to the Gaussian case, and illustrate the asymptotic theory. In this last experiment, we study the convergence of the estimators when $n$ grows in two situations: no outliers, or a proportion $\varepsilon \in \{0.05,0.1\}$ of {\it top-left} outliers.

\subsubsection{Robustness to various types of outliers in the Gaussian copula model}

For each type of outliers, and for each $\varepsilon$ in a grid that ranges from $0$ to $0.05$, we repeat $1000$ times the following experiment: the data are i.i.d from the Gaussian copula, the sample size is $n=1000$ and the parameter is calibrated so that $\tau=0.5$. Then, an exact proportion $\varepsilon$ of the data is replaced by outliers. We report the mean MSE of each estimator in Figure~\ref{FigureGaussOutliers}.

\medskip

\begin{figure}[p]
 \begin{center}
 \includegraphics[width=\textwidth]{./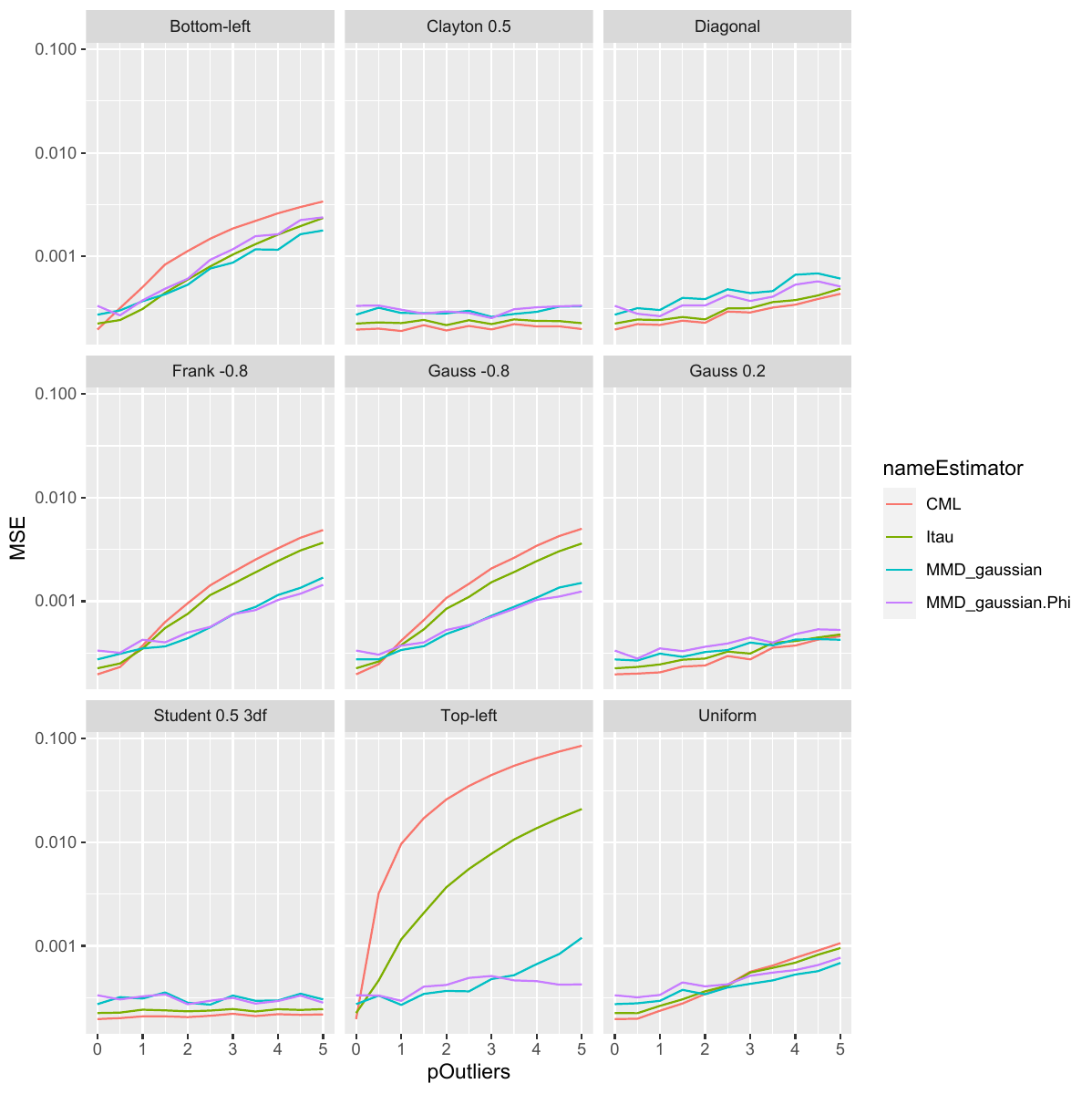}
 \caption{MSE of the MMD estimator with Gaussian kernel and $h(u)=u$, the MMD estimator with Gaussian kernel and $h(u)=\Phi^{-1}(u)$, the CML estimator and the method of moment based on Kendall's $\tau$, as a function of the proportion $\varepsilon$ of outliers. Sample size: $n=1000$, model: Gaussian copula. The title of each box gives the distribution of the contamination.}
 \label{FigureGaussOutliers}
 \end{center}
\end{figure}

When there are no outliers, CML yields the best estimator.
However, as soon as there is more than 2 or 3 percent of outliers, the MMD estimators become much more reliable when contamination arises from a distribution that
significantly differs from the reference Gaussian copula. Interestingly, the one based on $h(u)=u$ becomes equivalent to the one based on $h(u)=\Phi^{-1}(u)$ with {\it uniform} outliers, in terms of MSE.

\subsubsection{Robustness in various models}

Here, we replicate the previous experiments with other models: Clayton, Gumbel, Frank and Marshall-Olkin. In each case, the parameter was chosen so that $\tau=0.5$. We report the results in the case of {\it top-left} outliers in Figure~\ref{FigureOtherFamilies}.

\medskip

\begin{figure}[h]
 \begin{center}
 \includegraphics[width=15cm]{./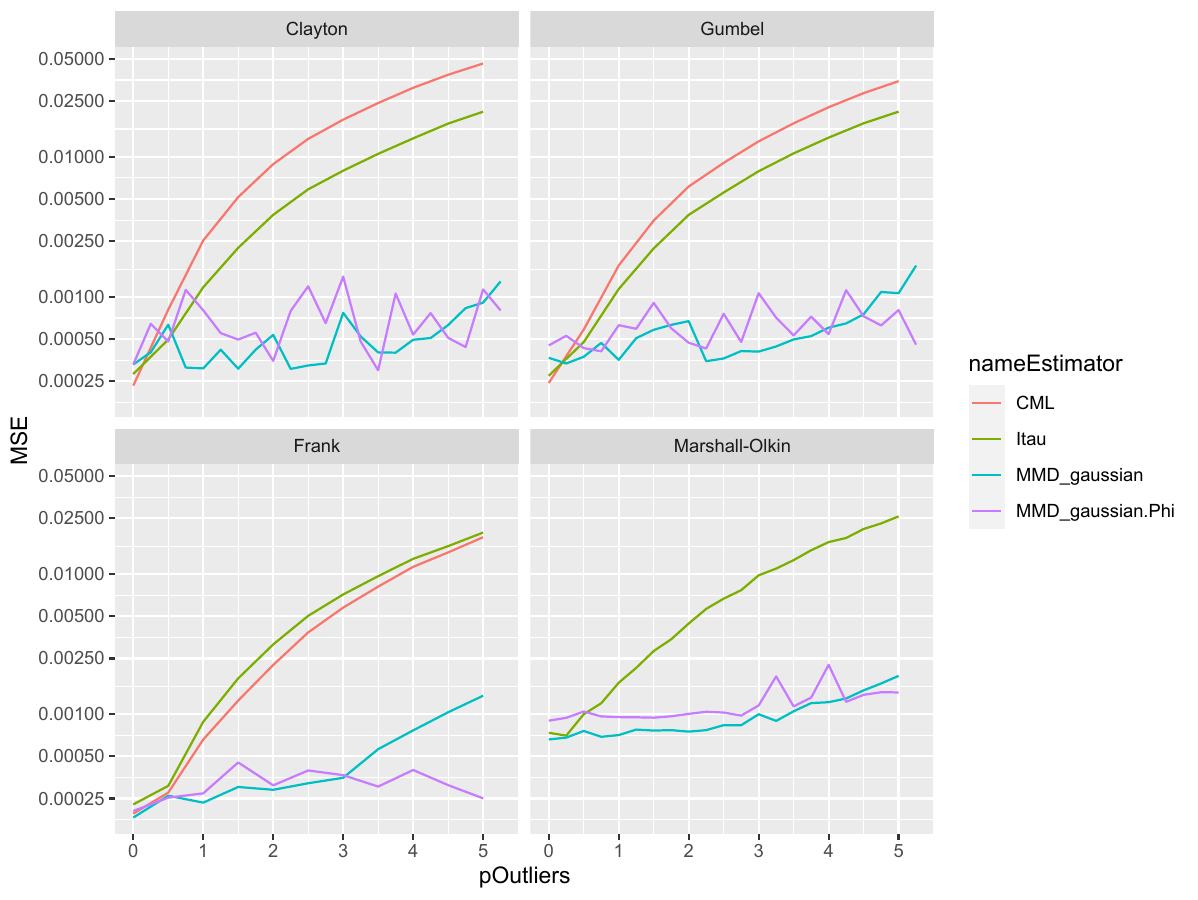}
 \caption{MSE of the MMD estimator with Gaussian kernel and $h(u)=u$, the MMD estimator with Gaussian kernel and $h(u)=\Phi^{-1}(u)$, the CML estimator and the method of moment based on Kendall's $\tau$, as a function of the proportion $\varepsilon$ of {\it top-left} outliers. Sample size: $n=1000$. Top-left: Clayton copula. Top-right: Gumbel copula. Bottom-left: Frank copula. Bottom-right: Marshall-Olkin copula.}
 \label{FigureOtherFamilies}
 \end{center}
\end{figure}

The conclusion remains unchanged: in all models, the MMD estimators are far more robust than the CML and the method of moments estimators.

\subsubsection{Convergence}

We finally come back to the Gaussian copula case. This time, we study the influence of the sample size $n$, ranging from $n=100$ to $n=5000$. We report the results of simulations without outliers ($\varepsilon=0.00$) and with {\it top-left} outliers ($\varepsilon=0.05$ and $\varepsilon=0.1$, independently of the sample size) in Figure~\ref{FigureGaussn}.

\medskip

\begin{figure}[h]
 \begin{center}
 \includegraphics[width=15cm]{./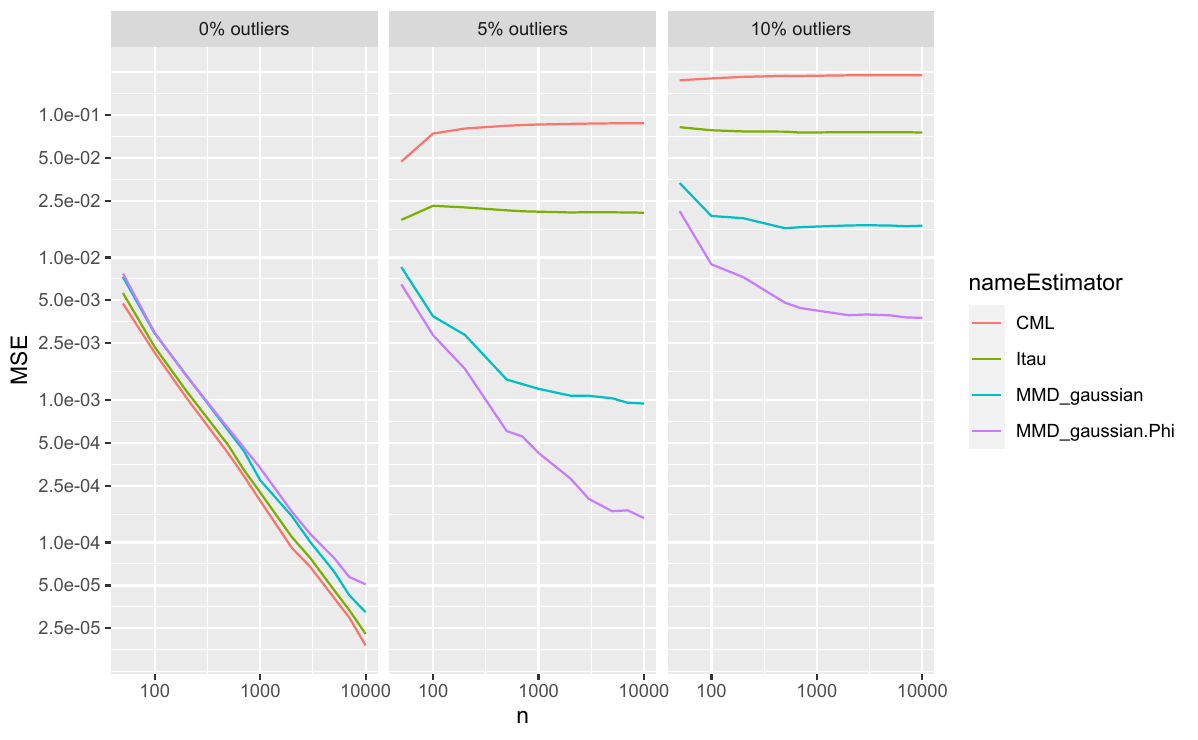}
 \caption{MSE of the MMD estimator with Gaussian kernel and $h(u)=u$, the MMD estimator with Gaussian kernel and $h(u)=\Phi^{-1}(u)$, the CML estimator and the method of moment based on Kendall's $\tau$, as a function of the sample size $n$. Model: Gaussian copula. Left: no outliers. Middle: a proportion $\varepsilon=0.05$ of outliers. Right: a proportion $\varepsilon=0.1$ of outliers.}
 \label{FigureGaussn}
 \end{center}
\end{figure}

When there are no outliers, we observe the $\sqrt{n}$ consistency of all the estimators, as predicted by the theory. The CML method yields the best estimator in this case. However, when there are outliers, the situation is dramatically different. All the estimators have an incompressible bias, and only their variances will decrease to $0$. However, we already observed that the MMD estimators are a lot more robust to outliers: indeed, here, their bias is (much) smaller than the other competing methods. Note that the hierarchy between the different methods is unaffected by the sample size.

\section{Conclusion}

We have shown that the estimation of semiparametric copula models by MMD methods yields consistent, weakly convergent and robust estimators.
In particular, when some outliers contaminate an assumed parametric underlying copula, the comparative advantages of our MMD estimator become patent.

\mds

To go further, many open questions would be of interest.
For instance, extending our theory to manage time series should be feasible.
Indeed, the theory of the weak convergence of empirical copula processes for dependent data
has been established in the literature; see, e.g.,~\cite{BuecherVolgushev}.
Moreover, finding a formal data-driven way of choosing the kernel tuning-parameter $\gamma$ would be useful.
Finally, in the case of highly parameterized models -- such as hierarchical Archimedean models (HAC), vines, or reliability models based on Marshall-Olkin copulas also called ``fatal shock'' models --, it could be interesting to introduce a penalization on $\theta$, for example as
$$
\tilde \theta_n \in  \arg\min_{\theta\in\Theta}
\int K_U(\uu,\vv) \Pb^U_\theta (d\uu) \, \Pb^U_\theta (d\vv) - \frac{2}{n}\sum_{i=1}^n \int K_U(\uu,\hat \UU_i) \Pb^U_\theta (d\uu) + \lambda \|\theta \|_1.$$
This idea would be different from the so-called ``regularized MMD'' in~\cite{Danafar2013} that is reduced to multiplying the first term on the right-hand side of the latter equation by a scaling factor.
To the best of our knowledge, the asymptotic or finite distance theory for the penalized MMD estimator $\tilde\theta_n$ still does not exist.
An interesting avenue for future research would be to fill this theoretical gap and to adapt this framework to copulas.

\mds

\noindent
{\bf Acknowledgements}
\noindent

\medskip
First, we thank both anonymous Referees for their insightful comments that led to many improvements of the paper.
Badr-Eddine Ch\'erief-Abdellatif acknowledges support of the UK Defence Science and Technology Laboratory (DSTL) and EPSRC under grant EP/R013616/1. This is part of the collaboration between US DOD, UK MOD and UK EPSRC under the Multidisciplinary University Research Initiative. Jean-David Fermanian has been supported by the labex Ecodec (reference project ANR-11-LABEX-0047).

\medskip


\newpage

\appendix

\section{Refinements on Theorem~\ref{upper_bound_finite_distance}}
\label{remark_non_asymptotic_regularity}

We mention in the paper that it is possible to strengthen Theorem~\ref{upper_bound_finite_distance} at the price of more regularity for $K_U$.
Indeed, assume $K_U$ is three times differentiable and invoke a
second-order limited expansion at $(\UU_i,\UU_j)$ for all the maps $(\uu,\vv)\mapsto  K_U( \uu,\vv)-2K_U(\uu,\UU_j)  +K_U(\UU_i,\UU_j)$, $i,j\in \{1,\ldots,n\}$.
With the same reasoning as in the proof above, this yields
\begin{eqnarray*}
\lefteqn{ \Db^2 (\hat\Pb_n,\Pb_n) =
\frac{1}{n^2} \sum_{i,j=1}^n \Big\{
(\hat \UU_i - \UU_i)^\top \partial^2_{1,2} K_U( \UU^*_i,\UU^*_j) ( \hat\UU_j - \UU_j) }\\
&+&
(\hat \UU_i - \UU_i)^\top \big\{\partial^2_{1,1} K_U( \UU^*_i,\UU^*_j) - \partial^2_{1,1} K_U( \UU^*_i,\UU_j) \big\}(\hat \UU_i - \UU_i) \\
&=&
\frac{1}{n^2} \sum_{i,j=1}^n
(\hat \UU_i - \UU_i)^\top \partial^2_{1,2} K_U( \UU^*_i,\UU^*_j) ( \hat\UU_j - \UU_j)  \\
&+& \frac{1}{n^2} \sum_{i,j=1}^n
 \partial^3_{1,1,2} K_U( \UU^*_i,\tilde\UU_j)\cdot (\hat \UU_i - \UU_i)^{(2)}\cdot (\UU^*_j - \UU_j),
\end{eqnarray*}
since $ \partial^2_{1,1} K_U( \uu,\vv) = \partial^2_{2,2} K_U(\vv,\uu)$, with obvious notations for differentials. Then,
$$ \Db^2 (\hat\Pb_n,\Pb_n)\leq d^2 \|d^{(2)} K_U\|_\infty \sup_{i=1,\ldots,n} \sup_{k=1,\ldots,d} |\hat U_{ik} - U_{ik}|^2
+  d^3 \|d^{(3)} K_U\|_\infty \sup_{i=1,\ldots,n} \sup_{k=1,\ldots,d} |\hat U_{ik} - U_{ik}|^3 .$$
As above, we get with probability larger than $1-\delta-\nu$,
\begin{eqnarray*}
\lefteqn{ \Db(\Pb_{\hat\theta_n},\Pb_0) \leq \inf_{\theta\in\Theta} \Db(\Pb_{\theta},\Pb_0) +
  \Big\{ \frac{8}{n}\sup_{\uu\in [0,1]^d} K_U(\uu,\uu)  \Big\}^{1/2} \Big\{ 1+ \big(- \ln \delta \big)^{1/2} \Big\} }\\
  &+&
\bigg\{ \frac{d^2}{n} \|d^{(2)} K_U\|_\infty \ln\big(\frac{2d}{\nu}\big)\bigg\}^{1/2} + \Bigg[
\frac{d^3}{\sqrt{2}n^{3/2}} \|d^{(3)} K_U\|_\infty \bigg\{\ln\Big( \frac{2d}{\nu}\Big) \bigg\}^{3/2}\Bigg]^{1/2}.
\end{eqnarray*}

\section{Proof of Proposition~\ref{prop_TAYLOR}}
\label{proof_TAYLOR}

By a limited expansion, there exists $\theta^*$, $\|\theta^*-\theta_0\|\leq \|\theta-\theta_0\|$, such that
  \begin{align*}
\Db^2(\Pb_{\theta},\Pb_{\theta_0}) &
\geq (\theta-\theta_0)^\top \nabla_{\theta=\theta_0}  \Db^2(\Pb_{\theta},\Pb_{\theta_0}) + \frac{1}{4}(\theta-\theta_0)^\top \nabla^2_{\theta=\theta^*,\theta=\theta^*} \Db^2(\Pb_{\theta},\Pb_{\theta_0}) (\theta-\theta_0)
\\
& \geq 0 + \frac{\lambda_{\min}(\theta_0)}{4} \|\theta-\theta_0\|^2,
 \end{align*}
for any $\theta$ such that $\|\theta-\theta_0\|<r$,
  In other words, for such $\theta$, we have
 \begin{equation}
 \label{equa:compa:norm}
 \|\theta-\theta_0\| \leq \frac{2}{\sqrt{\lambda_{\min}(\theta_0)}} \Db(\Pb_{\theta},\Pb_{\theta_0}).
 \end{equation}
 The conditions on $\varepsilon$ and $n$ ensure that, on an event with probability $1-\nu-\delta$, $ \Db(\Pb_{\hat\theta_n},\Pb_{\theta_0}) < \alpha $,
 that is, $\|\hat\theta_n - \theta_0\|<r$. Thus, combining~\eqref{equa:compa:norm} with Theorem~\ref{upper_bound_finite_distance}, we get the result.

\section{Proof of Proposition~\ref{AN_Gaussian_cop}}
\label{proof_AN_Gaussian_cop}

For every $\theta$ in a sufficiently small open neighborhood of $\theta_0\in (-1,1)$, copula densities exist and we have
\begin{equation}
\ell(\ww;\theta) = \int K(\uu,\vv) c_\theta(\uu) c_\theta(\vv)\, d\uu \, d\vv - 2 \int K(\uu,\ww) c_\theta(\uu) \,d\uu .
\label{IPP_Gaussian_cop}
\end{equation}

Let us check that all conditions \ref{cond_consistency1}-\ref{cond_normality_4_bis} are satisfied in this case, to apply Theorem~\ref{Th_AN}.
\begin{itemize}

    \item Condition~\ref{cond_consistency1}: obviously, choose a compact set $\Theta=[-1+\varepsilon,1-\varepsilon]^2$ that includes $\theta_0$, for some sufficiently small $\varepsilon>0$. Use the identity~(\ref{IPP_Gaussian_cop}) and the dominated convergence theorem to prove that the map $\theta\mapsto L_0(\theta)$ is continuous on $\Theta$.
        Moreover, $L_0(\cdot)$ is uniquely minimized at $\theta_0$. Indeed, $L_0(\theta)$ is equal to the MMD distance between $C_\theta$ and $C_{\theta_0}$ (up to a constant), which is minimized at $\theta_0$ and nowhere else due to the identifiability of the Gaussian family and knowing that our kernel is characteristic.
    \item Condition~\ref{cond_consistency2}: for any $\theta \in \Theta$,
    $$ |  \ell(\ww;\theta) | \leq
    \int |K|(\uu,\vv) c_\theta(\uu) c_\theta(\vv)\, d\uu \, d\vv + 2 \int |K|(\uu,\ww) c_\theta(\uu) \,d\uu \leq 3.$$
    Thus, the envelope function of the family of functions $\ww\mapsto \ell(\ww,\theta)$ is a constant and is then integrable.
    By the dominated convergence theorem, $\theta\mapsto \ell(\ww,\theta)$ is continuous on $\Theta$.
    \item Condition \ref{cond_normality_0} is obviously satisfied with our choice $-1<\theta_0<1$.
    \item Condition \ref{cond_normality_1} and \ref{cond_normality_2} are satisfied. Indeed, we can calculate the derivatives of
    $\theta \mapsto \ell(\ww;\theta)$ by differentiating Gaussian copula densities
    inside the integral sign. This does not affect $K$. Since the latter kernel is bounded by one, the dominated convergence applies, even uniformly with respect to $\theta\in \Theta$ and $\ww\in [0,1]^d$.
    \item To get Condition \ref{cond_normality_3}, note that $B=\mathbb{E}[\nabla_{\theta,\theta}^2 \ell(\UU;\theta_0)] < +\infty$ with the same arguments as before.
    The calculation of $B$ is of interest, because it would yield an analytic form for the asymptotic variance of $\hat\theta_n$.
    As noted before, $B$ can be deduced from the map $\theta\mapsto \mathbb{E}[\ell(\UU;\theta)]$, after calculating the second derivative of the latter function,
    evaluated at $\theta=\theta_0$. This can be done when $K=K_U$, using the formulas and notations of Section~\ref{calc_criterion_Gaussian_cop}.
    Since
    $$ \mathbb{E}[\ell(\UU;\theta)] = \Ic(\theta,\theta)-2 \Ic(\theta,\theta_0)=I(\theta)- I\big( (\theta+\theta_0)/2 \big),$$
    with $ I(\theta)= \gamma^2\{ (2+\gamma^2/2)^2 - 4 \theta^2\}^{-1/2}/2$, we deduce $B= 3 I''(\theta_0)/4$. Simple calculations yield
$$ I'(\theta)= \frac{2\theta \gamma^2}{\{ (2+\gamma^2/2)^2 - 4 \theta^2\}^{3/2}}\;\;\text{and}\;\;
 I''(\theta)= \frac{2 \gamma^2 \big\{ (2+\gamma^2/2)^2 + 8\theta^2 \big\}}{\{ (2+\gamma^2/2)^2 - 4 \theta^2\}^{5/2}},$$
 that is strictly positive.
 \mds
 When $K=K_G$, no closed form formula for $B$ is available.

    \item Condition \ref{cond_normality_plus} is obviously satisfied (first-order conditions).
    \item Condition \ref{cond_normality_5_bis}:
     to prove that the gradient of the loss $\nabla_\theta\ell(\cdot;\theta_0)$ is of bounded variation, it is sufficient to show that the mixed partial derivative $\ww \mapsto \nabla^3_{\theta,1,2}\ell(\ww;\theta_0)$ is continuous on $[0,1]^2$, and also that the functions $w_1 \mapsto \nabla^2_{\theta,w_1}\ell(w_1,1;\theta_0)$ and $w_2 \mapsto \nabla^2_{\theta,w_2}\ell(1,w_2;\theta_0)$ are continuous on $[0,1]$ (\cite{Radoluvic}, p.3351).
     When $K=K_G$, there is no hurdle by dominated convergence. When $K=K_U$, this is guaranteed by the same argument when $\gamma^2<2$. Indeed, the latter condition
     imposes the nullity of the latter derivatives when one of their arguments is zero or one.

\item Condition \ref{cond_normality_4_bis} is satisfied for the Gaussian copula when $|\theta_0|<1$: see Example 5.1 in~\cite{Segers2012}.
\end{itemize}

\section{Proof of Corollary~\ref{AN_Gaussian_cop_cor}}
\label{AN_Gaussian_cop_cor_PROOF}

The arguments are exactly the same as for Proposition~\ref{AN_Gaussian_cop}, replacing $\theta_0$ by $\theta_0^*$ if necessary.
We need care only when expectations under the true DGP are required, i.e. in
Condition~\ref{cond_normality_2} mainly, the assumptions~\ref{cond_normality_3} and~\ref{cond_normality_plus} being assumed.
Since $C_0(d\uu)$ is a probability measure,
$\int \sup_{\theta\in N(\theta_0^*)} \big\|\nabla_{\theta,\theta}^2\ell (\uu;\theta) \big\| \, C_0(d\uu) < +\infty,$ for some neighborhood $N(\theta_0^*)$ of $\theta_0^*$, yielding Condition~\ref{cond_normality_2}.

\section{Proof of Proposition~\ref{AN_MO}}
\label{proof_AN_MO}
To apply Theorem~\ref{Th_AN}, it is sufficient to check that the conditions~\ref{cond_consistency1}-\ref{cond_normality_5_bis} and~\ref{cond_normality_4_ter}-\ref{cond_moment_dell} are satisfied.
To calculate $\ell (\cdot;\theta)$ and $L_0 (\theta)$, we rely on the formulas~(\ref{psi_UV}),~(\ref{def_I1}) and~(\ref{def_I2}).
Note that we will restrict ourselves to parameters $\alpha$ and $\beta$ into $[\epsilon,1-\epsilon]$.
Therefore,
$$ \bar u^*=\max_{(\alpha,\beta)\in \Theta} \bar u_{\alpha,\beta} < 1, \;\text{and}\;
 \bar u_*=\min_{(\alpha,\beta)\in \Theta} \bar u_{\alpha,\beta} >0.$$
It can be checked that the map $\bar u: (\alpha,\beta)\mapsto \bar u_{\alpha,\beta}=(\beta/\alpha)^{\beta/(\alpha-\beta)}$ from $\Theta$ to $\Rb$ is two times continuously differentiable. To this goal, it is necessary to extend the map $\bar u$ by continuity, setting $\bar u (\alpha,\alpha)=e^{-1}$,
$\partial_1 \bar u (\alpha,\alpha)=e^{-1}/(2\alpha)$,
$\partial_2 \bar u (\alpha,\alpha)=-e^{-1}/(2\alpha)$,
$\partial^2_{1,1} \bar u (\alpha,\alpha)=-5e^{-1}/(12\alpha^2)$, $\partial^2_{2,2} \bar u (\alpha,\alpha)=7e^{-1}/(12\alpha^2)$ and
$\partial^2_{1,2} \bar u (\alpha,\alpha)=-5e^{-1}/(12\alpha^2)$.

\mds
For any continuous and bounded map $\psi:[0,1]^2\mapsto \Rb$, we recall that
\begin{eqnarray*}
\lefteqn{ \Eb_\theta[\psi(U_1,U_2)] =
\int \psi (s,t)
\big\{ (1-\alpha)s^{-\alpha}\1(s^\alpha > t^\beta)+ (1-\beta)t^{-\beta}\1(s^\alpha < t^\beta) \big\} \,ds\,dt   }\\
& + &  \int_0^{\bar u_{\alpha,\beta}} \psi(u,u^{\alpha/\beta})\,\beta u^{1-\alpha}\, du +
\int_{\bar u_{\alpha,\beta}}^{1} \psi(u,u^{\alpha/\beta})\,\alpha u^{\alpha/\beta-\alpha}\, du , \hspace{4cm}
\end{eqnarray*}
that can be seen as the integral of a map $(s,t)\mapsto g_\theta(s,t)$ on $[0,1]^2$ with respect to the Lebesgue measure (single integrals are particular cases of double integrals!). Such maps are continuous a.e., and
$$ \sup_{\theta\in \Theta} |g_\theta|(s,t) \leq \|\psi \|_\infty \big( s^{\epsilon - 1} + t^{\epsilon-1} + 2  \big). $$
The function on the r.h.s. of the latter equation is integrable on $[0,1]^2$ with respect to the Lebesgue measure.
By dominated convergence, we deduce the map $\theta \mapsto \ell (\ww,\theta)=\Eb_\theta\big[K(\UU,\VV)\big] - 2 \Eb_\theta\big[K(\UU,\ww)\big]$
is continuous on $\Theta$ for every $\ww$, for any bounded kernel, in particular when $K\in \{K_U,K_G\}$. The same arguments apply for $\theta \mapsto L_0(\theta)$ when $\theta\in \Theta$.
We deduce that Conditions~\ref{cond_consistency1} and~\ref{cond_consistency2} are satisfied, and $\hat\theta_n$ is consistent.

\mds

Now assume $K=K_G$.
To check Condition~\ref{cond_normality_1}, we have to prove that the calculations of derivatives of $\ell(\ww,\theta)$ with respect to $\theta$ are permitted inside our integral signs.
Such integrands are indeed two times continuously differentiable with respect to $\theta\in \Theta$ for almost all their other arguments into the interior of their domains. Moreover, they are upper bounded by some integrable envelope functions. Then, the dominated convergence theorem applies.
Nonetheless, since these integrands are often integrals themselves, it may be necessary to rely on the dominated convergence theorem again to state continuity.
The calculations of such derivatives induce many terms, but the same technique applies to all.
We will illustrate the arguments on some of them.

\mds

For instance, the $\theta$-derivatives of $\ell(\ww,\theta)$ involves the derivatives of
$$ \theta \mapsto
I_1(\theta)= \int u_1^{-\alpha} v_1^{-\alpha} \Big\{ \int  K_G(\uu,\vv)  \1(u_1^{\alpha/\beta} > u_2,
v_1^{\alpha/\beta} > v_2) \, du_2\, dv_2 \Big\} \, du_1\, dv_1 .$$
The successive partial derivatives of $I_1$ with respect to $\alpha$ and/or $\beta$ can be obtained by derivation under the integral sign by applying the dominated
convergence theorem. Indeed, for any bounded map $H$ and any couple of nonnegative integers $(a,b)$,
$$ \sup_{(\alpha,\beta)\in \Theta} (u_1 v_1)^{-\alpha} |\ln u_1|^{a} |\ln v_1|^{b}
H(u^{\alpha/\beta}_1,v_1^{\alpha/\beta}) \leq \| H \|_\infty (u_1 v_1)^{-\epsilon} |\ln u_1|^{a} |\ln v_1|^{b}, $$
that is integrable. Here, the latter bounded map $H$ involves $K_G$, its derivatives and some nonnegative powers of its arguments.

\mds

Another term of $\ell(\ww;\theta)$ is
$$ \theta \mapsto I_2(\theta)= \int u^{1-\alpha} v_1^{-\alpha} K_G(u,u^{\alpha/\beta},\vv)  \1(u< \bar u_{\alpha,\beta},
v_2 < v_1^{\alpha/\beta}) \, du\, d\vv,$$
that can be managed similarly.

\mds

The last family of terms we have to manage are
\begin{eqnarray*}
\lefteqn{\theta \mapsto I_3(\ww,\theta)=
\int K_G (s,t,\ww)
\big\{ (1-\alpha)s^{-\alpha}\1(s^\alpha > t^\beta)+ (1-\beta)t^{-\beta}\1(s^\alpha < t^\beta) \big\} \,ds\,dt   }\\
& + &  \int_0^{\bar u_{\alpha,\beta}} K_G(u,u^{\alpha/\beta},\ww)\,\beta u^{1-\alpha}\, du +
\int_{\bar u_{\alpha,\beta}}^{1} K_G(u,u^{\alpha/\beta},\ww)\,\alpha u^{\alpha/\beta-\alpha}\, du , \hspace{4cm}
\end{eqnarray*}
for some $\ww \in [0,1]^d$. They do not induce any additional difficulty.
Then, Condition~\ref{cond_normality_1} is satisfied.
This is still the case for Condition~\ref{cond_normality_2} because the upper bounds of $I_3(\ww,\theta)$ and its partial derivatives with respect to $\theta$ are uniform
with respect to $(\alpha,\beta)\in \Theta$ and $\ww\in [0,1]^d$. The latter argument is key to justify Corollary~\ref{AN_MO_cop_cor}.

\mds

Condition~\ref{cond_normality_3} can be obtained by the same type of reasonings.
Nonetheless, we do not exclude that $B$ could be not invertible for particularly unhappy choices of $(\alpha,\beta,\gamma)$.
Since the latter set of parameters is the roots of some analytic expression $\Hc(\alpha,\beta,\gamma)=0$, its Lebesgue measure is zero most often.
Due to the regularity of $L_0$ and the correct model specification, Condition~\ref{cond_normality_plus} is fulfilled.

\mds

Again, Condition~\ref{cond_normality_5_bis} is satisfied still because
$\ww\mapsto \nabla^3_{\theta,w_1,w_2} \ell (\ww,\theta_0)$ is continuous on $[0,1]^2$ with respect to the Lebesgue measure and
$w_1 \mapsto \nabla^2_{\theta,w_1}\ell(w_1,1;\theta_0)$ and $w_2 \mapsto \nabla^2_{\theta,w_2}\ell(1,w_2;\theta_0)$ are continuous on $[0,1]$ with respect to the Lebesgue measure, by dominated convergence.
The same arguments apply to check Condition~\ref{cond_moment_dell}, by choosing the powers $q_I$ sufficiently close to one.

\mds

Finally, Condition~\ref{cond_normality_4_ter} is obviously satisfied because the curve $\Cf$ has Lebesgue measure zero on the plane.

\mds

\section{MMD criterion for a bivariate Gaussian copula model}
\label{calc_criterion_Gaussian_cop}

Here, we explicitly write our MMD criterion in the case of bivariate Gaussian copulas.
Recall that the density of a Gaussian copula in dimension two is
$$ c_\theta(u_1,u_2)= \frac{1}{2\pi\sqrt{1-\theta^2}\phi(x_1) \phi(x_2)} \exp\Big\{-\frac{1}{2(1-\theta^2)} \big( x_1^2+x_2^2 - 2\theta x_1 x_2  \big)   \Big\},$$
by setting $ x_k=\Phi^{-1}(u_k)$, $k=1,2$.
Define $\xx=(x_1,x_2)$. Similarly, $ y_k=\Phi^{-1}(v_k)$, $k=1,2$ and $\yy=(y_1,y_2)$.
For obtaining closed form formulas, it is necessary to select an adapted kernel.
Here, we use the Gaussian-type kernel~(\ref{def_K_U}), with $h=\Phi^{-1}$.

\mds

Now, let us analytically specify the criterion in~(\ref{RKHS_criterion}). First, let us calculate
    $$ \Ic(\theta_1,\theta_2)= \int K_U(\uu,\vv)\, C_{\theta_1}(d\uu) \, C_{\theta_2}(d\vv),\; (\theta_1,\theta_2)\in (-1,1)^2.$$
    By a change of variable, note that
    $$ \Ic(\theta_1,\theta_2)= \Eb\Big[ \exp\big\{-\frac{(X_1-Y_1)^2+ (X_2- Y_2)^2}{\gamma^2}\big\}\Big],$$
    for a Gaussian centered random vector $(X_1,X_2,Y_1,Y_2)$ whose $4\times 4$ covariance matrix is block-diagonal. Its first (resp. second) $2\times 2$ block is a correlation matrix with an extra-diagonal coefficient $\theta_1$ (resp. $\theta_2$). Therefore, the bivariate random vector $(Z_1,Z_2)=(X_1-Y_1,X_2-Y_2)/\sqrt{2}$ is centered Gaussian and its covariance matrix is a correlation matrix with an extra-diagonal coefficient $s=(\theta_1+\theta_2)/2$. Since the conditional law of $Z_1$ given $Z_2=z_2$ is $\Nc(s z_2,1-s^2)$,
    we can easily calculate $\psi(z)=\Eb\big[  \exp \{ -Z_1^2 /(\gamma^2/2) \} | Z_2=z \big]$.
    Indeed, setting $\tau^2 = \{ 1/\gamma^2+1/(1-s^2)\}^{-1}$, we have
    \begin{eqnarray*}
    \lefteqn{ \psi(z)=
    \int \exp\big(-\frac{t^2}{\gamma^2/2}\big) \exp\Big\{ -\frac{(t-s z)^2}{2(1-s^2)}\Big\} \frac{dt}{ \sqrt{2\pi} \sqrt{1-s^2}}     }\\
    &=& \int \exp\big(-\frac{t^2}{2\tau^2} + \frac{s t z }{1-s^2}\big) \frac{dt}{ \sqrt{2\pi} \sqrt{1-s^2}} \exp\big\{-\frac{s^2 z^2}{2(1-s^2)}    \big\}  \\
    &=& \
    \frac{\gamma/\sqrt{2}}{\sqrt{2(1-s^2)+\gamma^2/2}}   \exp\Big\{-\frac{s^2  z^2 }{2(1-s^2)+\gamma^2/2}\Big\}.
    \end{eqnarray*}
   We deduce
    \begin{eqnarray*}
    \lefteqn{ \Ic(\theta_1,\theta_2) = \Eb\Big[  \exp\big( -\frac{Z_1^2 + Z_2^2}{\gamma^2/2}   \big)  \Big]
    = \Eb_{Z_2}\Big[  \exp\big( -\frac{Z_2^2}{\gamma^2/2}   \big) \Eb\big[  \exp\big( -\frac{Z_1^2 }{\gamma^2/2}\big) | Z_2 \big]   \Big] }\\
    &=& \int \exp\big( - \frac{t^2}{\gamma^2/2}\big) \psi(t) \phi(t) \, dt
    = \gamma^2/2 \big\{ (2 + \gamma^2/2)^2 - 4s^2 \big\}^{-1/2}=: I(s).
    \end{eqnarray*}
Moreover, the other integrals in~(\ref{RKHS_criterion}) are as
$$ \int K_U(\uu,\hat \UU_i) c_\theta (\uu) \, d\uu=
\Eb \Big[ \exp\big\{- \frac{(X_1- \Phi^{-1}(\hat U_{i,1}))^2+(X_2
- \Phi^{-1}(\hat U_{i,2}))^2}{\gamma^2}     \big\}   \Big],$$
for some standardized bivariate Gaussian random vector $(X_1,X_2)$, $\Eb [X_1 X_2]=\theta$.
For any real numbers $(a,b)$, standard arguments yield
\begin{eqnarray*}
\lefteqn{\Jc (\theta,a,b)= \Eb \Big[ \exp\big\{- \frac{(X_1- a)^2+(X_2- b)^2}{\gamma^2} \big\} \Big]  }\\
&=& \Eb \Big[  \exp\big\{- \frac{(X_2- b)^2}{\gamma^2} \big\} \Eb\big[ \exp\big(- \frac{(X_1- a)^2}{\gamma^2} \big) | X_2\big] \Big]   \\
&=& \frac{\gamma/\sqrt{2}}{\sqrt{1+\gamma^2/2-\theta^2}}\Eb \Big[  \exp\big\{- \frac{(X_2- b)^2}{\gamma^2} \big\}
\exp\big\{ - \frac{(\theta X_2- a)^2}{2(1+ \gamma^2/2 - \theta^2)}    \big\} \Big]   \\
&=& \frac{\gamma/\sqrt{2}}{\sqrt{1+\gamma^2/2-\theta^2}}\int
\exp\big\{- \frac{x^2}{2g^2}+\frac{\lambda x}{g^2} -\frac{b^2}{\gamma^2} -\frac{a^2}{2(1+\gamma^2/2-\theta^2)}
 \big\}\,\frac{dx}{\sqrt{2\pi}}    \\
&=& \frac{g\gamma/\sqrt{2}}{\sqrt{1+\gamma^2/2-\theta^2}} \exp\Big\{ \frac{\lambda^2}{2g^2} - \frac{b^2}{\gamma^2} - \frac{a^2}{2(1+\gamma^2/2-\theta^2)}   \Big\},
\end{eqnarray*}
by setting
$$ \frac{1}{g^2} =\frac{1}{\gamma^2/2} + \frac{\theta^2}{1+\gamma^2/2-\theta^2} +1,\;\frac{\lambda}{g^2}= \frac{b}{\gamma^2/2} + \frac{a\theta}{1+\gamma^2/2 -\theta^2} \cdot$$
Therefore, the estimated parameter of the bivariate Gaussian copula is
$$ \hat\theta_n=\arg\min_\theta \Ic(\theta,\theta) - 2n^{-1}\sum_{i=1}^n \Jc\big(\theta,\Phi^{-1}(\hat U_{i,1}),\Phi^{-1}(\hat U_{i,2})\big).$$
Note that generalizations of the latter calculations in larger dimensions would be quite cumbersome.
Finally, let us notice that
\begin{align*}
 \mathbb{D}^{2}(\Pb_{\theta_1},\Pb_{\theta_2})
 & = \mathcal{I}(\theta_1,\theta_1) + \mathcal{I}(\theta_2,\theta_2) - 2 \mathcal{I}(\theta_1,\theta_2)\\
 & = f(\theta_1)+f(\theta_2)-2f\left(\frac{\theta_1+\theta_2}{2}\right),
\end{align*}
where
$$ f(x) =  \frac{\gamma^2/2}{\sqrt{(2+\gamma^2/2)^2 - 4 x^2  }} \cdot$$
As, for any $x\in(-1,1)$,
\begin{align*}
f''(x) &
= \frac{3x^2 \gamma^2/2}{\left\{(2+\gamma^2/2)^2 - 4 x^2  \right\}^{5/2}} + \frac{\gamma^2/2}{\left\{(2+\gamma^2/2)^2 - 4 x^2  \right\}^{3/2}} \\
& \geq \frac{\gamma^2/2}{\left(2+\gamma^2/2 \right)^3} =: \alpha(\gamma),
\end{align*}
we obtain that $f$ is $\alpha(\gamma)$-strongly convex. This leads to
$$
f\left(\frac{\theta_1+\theta_2}{2}\right) \leq \frac{f(\theta_1) + f(\theta_2)}{2} - \frac{\alpha(\gamma)}{8}(\theta_1-\theta_2)^2,$$
that implies
$$
(\theta_1-\theta_2)^2 \leq \frac{4}{\alpha(\gamma)} \mathbb{D}^{2}(\Pb_{\theta_1},\Pb_{\theta_2}).
$$
Therefore, we have obtained
$|\theta_1-\theta_2| \leq 2 \mathbb{D}(\Pb_{\theta_1},\Pb_{\theta_2})/ \sqrt{\alpha(\gamma)}$,
which proves the claim in Example~\ref{exm:Gaussian:robustness}, setting
$$ c(\gamma) = \frac{2}{\sqrt{\alpha(\gamma)}} = \frac{\left(4+\gamma^2\right)^{3/2}}{\gamma} \cdot $$

\section{Computational cost}
\label{comp_cost}

Let us provide some comments on the computational cost of our algorithm.
If the gradient descent involves $T$ steps, its total numerical cost is obviously $T$ times $Cost1Step$,
the cost of one gradient step. The latter one can be decomposed as $$Cost1Step= O\big(n \times CostSampling(d,p) + n^2 (p+d) + n \times CostGrad(d,p)\big)$$ where $CostSampling(d,p)$ is the cost of sampling one $d$-dimensional observation from the copula $C_{\theta}$ for a given value of $\theta$ and $CostGrad(d,p)$ is the cost of computing the gradient $d\ln c_\theta(\uu) / d\theta$ for one value of $\uu$ and one value of $\theta$.
We must have $CostSampling(d,p) \geq \max(d,p)$ and $CostGrad(d,p) \geq \max(d,p)$ but explicit expressions of these costs will depend on the form of the parametric family of copulas that is used. Note that in very high dimension (large $d$ and/or $p$), our procedure becomes slow. Under such circumstances, it would be necessary to rely on alternative optimization criteria, as in composite likelihood techniques or pair-copula
 constructions. Both methodologies estimate the copulas associated to subvectors of $\XX$ (bivariate ones for vines), a strategy that may be sufficient to identify the true parameter and is numerically relevant.
 Table~\ref{tab:computation_time} displays the computation time of our algorithms for different parametric families. Note that this is much slower than the \texttt{BiCopEst} method in the \texttt{VineCopula} package which typical running time for the CML estimator with $n=1000$ is 0.01 second. This difference may be due to the programming language since \texttt{VineCopula}'s underlying code is written in C.
All these computations were done on a Windows 10 laptop, with a processor Intel Core i7-3630 2.40GHz.

\begin{table}[t]
    \centering 
    \begin{tabular}{lll}
        \textbf{Family} & \textbf{Mean computation time (s)} & \textbf{Sd computation time (s)} \\
        \hline
Gaussian &	4.80 &	0.227	\\	
Clayton &	5.22 &	0.177		\\		
Gumbel &	7.50 &	0.199		\\		
Frank &	72.16 &	1.397	\\	
Marshall-Olkin &	4.26 & 	0.217		\\	
    \end{tabular}
       \begin{tabular}{lll} 
        \textbf{Family} & \textbf{Mean computation time (ms)} & \textbf{Sd computation time (ms)} \\
                \hline
Gaussian & 5.502 & 0.73 \\
Clayton  & 6.059 & 0.40 \\
Gumbel   & 11.56 & 0.23 \\
Frank    & 7.704 & 0.86 \\
Marshall-Olkin & (not implemented) & (not implemented)
\end{tabular}
    
    \caption{Computation time of the MMD estimator (first panel) and of \texttt{BiCopEst} (second panel) for different families (with $n=1000$)}
    \label{tab:computation_time}
\end{table}

\section{Supplementary experiments: confidence intervals by bootstrap and subsampling}
\label{bootstrap}

In this section, we compare the empirical properties of two confidence intervals, one based on bootstrap and the other one based on subsampling.
We find that the bootstrap-based confidence intervals (when the subsampling size is the same as the sample size) is too liberal and does not attain its nominal 95\% coverage.
At the opposite, the subsampling-based confidence intervals (when the subsampling size is smaller than the sample size) are quite conservative with 100\% coverage in the Gaussian case: see Figure~\ref{fig:length_CI} and Figure~\ref{fig:coverage_CI}.

\begin{figure}[htb]
    \begin{minipage}[t]{0.49\textwidth}
        \begin{center}
 \includegraphics[width=\textwidth]{./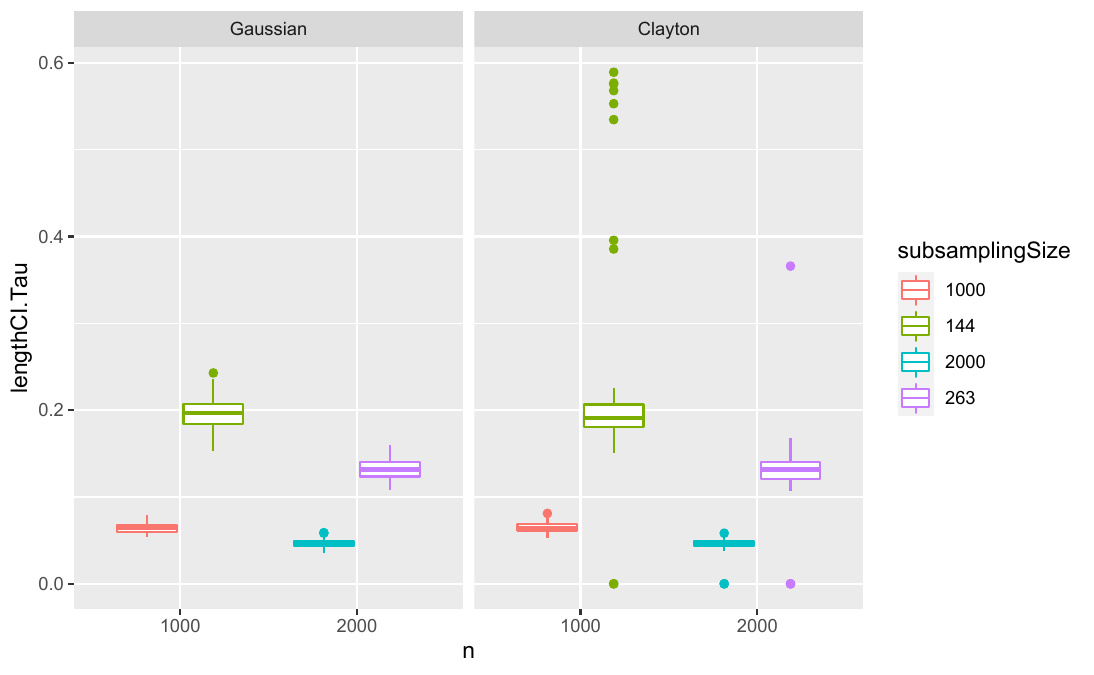}
 \caption{Boxplot of the length of the confidence intervals for two sample sizes.}
 \label{fig:length_CI}
 \end{center}
    \end{minipage}
    \hfill
    \begin{minipage}[t]{0.49\textwidth}
       \begin{center}
 \includegraphics[width=\textwidth]{./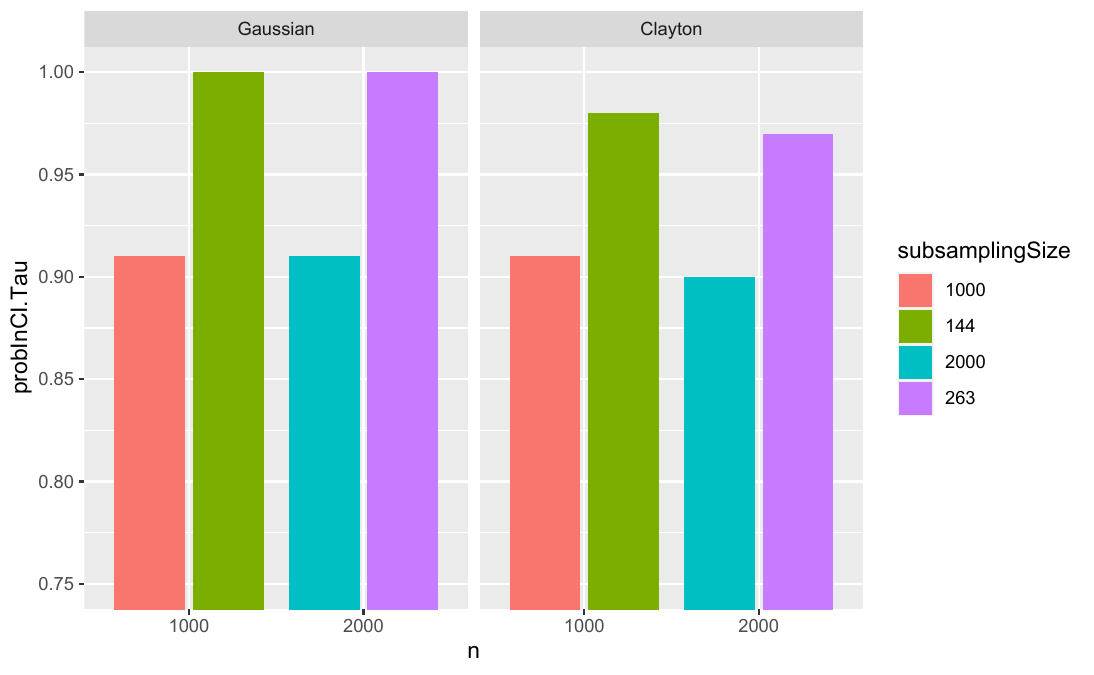}
 \caption{Barplot of the coverage percentage of the confidence intervals for two sample sizes.}
 \label{fig:coverage_CI}
 \end{center}
    \end{minipage}
\end{figure}

\section{Supplementary experiments: influence of $\gamma$}
\label{supp_gamma}

In these supplementary experiments, we evaluate the influence of the tuning parameter $\gamma$ of the MMD estimator, along with other properties of the estimation algorithm. In Figure~\ref{FigureMSE_gamma_init}, two methods of initialisation are compared (with a sample size $n=1000$). In the first method the empirical Kendall's tau is used as the starting point while in the second, a random starting point is sampled from the interval $[-0.95, 0.95]$.

In the non-contaminated case, the estimator initialised with the empirical Kendall's tau exhibit the best performances
even when $\gamma$ is very low, since the starting point is already very good.
On the contrary, in the contaminated case with 50 outliers out of 1000 data points, the estimator with such an initialisation
method is actually the worst since the empirical Kendall's tau may correspond in this case to a local minimum only,
instead of a global one. Note that these comparisons concerns the optimal $\gamma$ (i.e. in a minimax sense).
As soon as $\gamma$ is far from its optimal value, the initialisation with Kendall's tau is the best one, even in the
contaminated case. Therefore, we have decided to choose the optimal value of $\gamma$ and
the random initialisation by default, since the goal of this article is to construct a
robust estimator with the best performance.


\begin{figure}[htb]
 \begin{center}
 \includegraphics[width=15cm]{./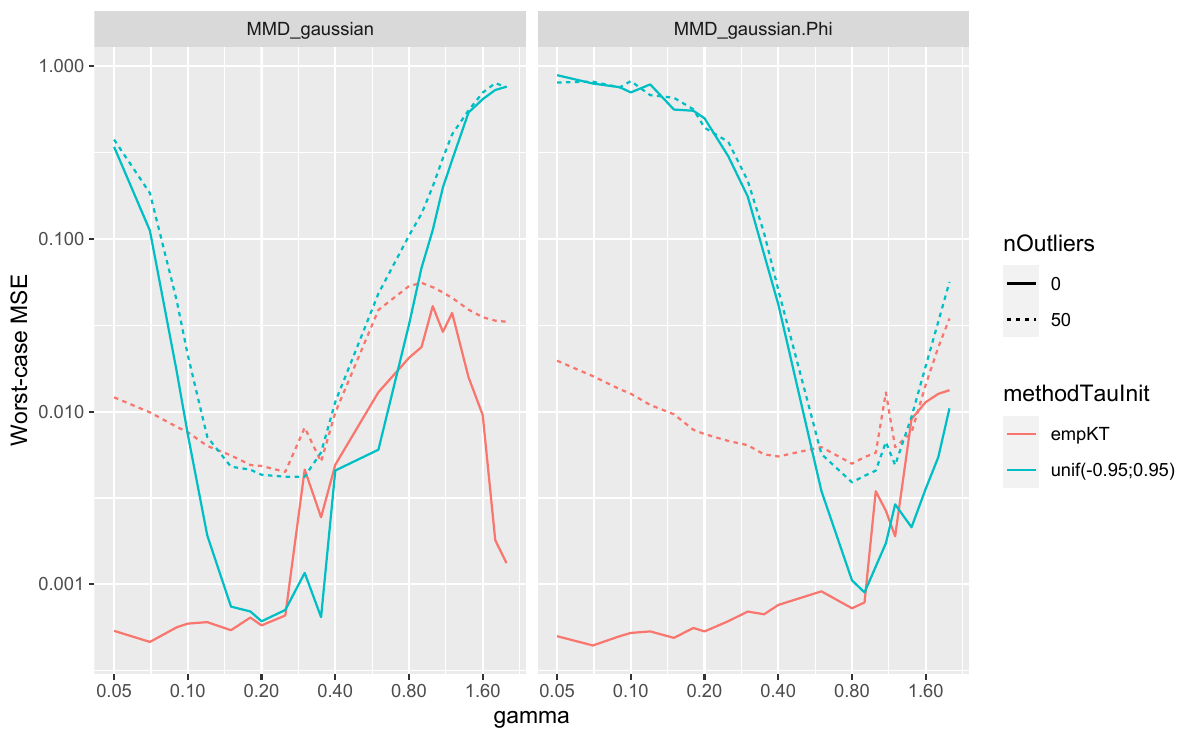}
 \caption{MSE of the MMD estimator with Gaussian kernel and $h(u)=u$, the MMD estimator with Gaussian kernel and $h(u)=\Phi^{-1}(u)$ as a function of the tuning parameter $\gamma$ for two different numbers of outliers and two different methods of initialisation: the initialisation using the empirical Kendall's tau (``empKT'') and the one using a random number from the interval $[-0.95, 0.95]$ (``unif(-0.95;0.95)'').}
 \label{FigureMSE_gamma_init}
 \end{center}
\end{figure}

\mds

In order to have a more precise understanding on the link between $\gamma$ and the MSE, two files containing 3-dimensional plots are given as Supplementary Material. The first one describes the influence of $\gamma$ in the regular parametric copula families (Gaussian, Clayton, Gumbel, Frank) on the MSE of the MMD with $K_U$ while the second concerns Marshall-Olkin copulas for the two kernels $K_U$ and $K_G$. For each estimator and each model, two plots are given on a logarithmic and on a linear scale, for easier comparison.


\end{document}